\newtheorem{theorem}{Theorem}
\newtheorem{definition}{Definition}
\newtheorem{proposition}{Proposition}
\newcommand{\norm}[1]{\left\lVert#1\right\rVert}
\newcommand{\argmin}{\mathrm{arg}\min}
\begin{document}

\title{Controller Synthesis of Wind Turbine Generator and Energy Storage System with Stochastic Wind Variations under Temporal Logic Specifications
}
\author{Zhe~Xu, Agung~Julius, Ufuk Topcu and Joe H. Chow
	}

\maketitle

\begin{abstract}
In this paper, we present a controller synthesis approach for wind turbine generators (WTG) and energy storage systems with metric temporal logic (MTL) specifications, with provable probabilistic guarantees in the stochastic environment of wind power generation. The MTL specifications are requirements for the grid frequency deviations, WTG rotor speed variations and the power flow constraints at different lines. We present the stochastic control bisimulation function, which bounds the divergence of the trajectories of a switched stochastic control system and the switched nominal control system in a probabilistic fashion. We first design a feedforward controller by solving an optimization problem for the nominal trajectory of the deterministic control system with robustness against initial state variations and stochastic uncertainties. Then we generate a feedback control law from the data of the simulated trajectories. We implement our control method on both a four-bus system and a nine-bus system, and test the effectiveness of the method with a generation loss disturbance. We also test the advantage of the feedback controller over the feedforward controller when unexpected disturbance occurs. 			
\end{abstract}  

\begin{IEEEkeywords}
	Controller synthesis, differential-algebraic equations, temporal logic, energy storage systems            
\end{IEEEkeywords}

\IEEEpeerreviewmaketitle
	\section{Introduction} \label{Introduction}
With the increasing penetration of renewable energy into the power grid, the stochastic nature of the renewable energy such as wind energy makes the planning and operation of power grid more challenging. Energy storage systems such as battery energy storage systems can be utilized to compensate the volatility brought by the renewable energy and regulate the grid frequency within the allowable range~\cite{zhe2017, Mondal2015, zheACC}. On the other hand, the renewable energy can also be utilized for the frequency regulation when disturbances such as generation loss or line failure occurs. For example, it has been shown that wind turbine generators can adjust its power output for restoring the grid frequency after a disturbance~\cite{ZhangPulgar2017}.
	
As different generators and energy storage systems have different response time, the regulated
frequency could have different temporal properties. Therefore, temporal logics can be utilized to provide time-related specifications such as ``after a disturbance, the grid frequency should be restored to 60Hz$\pm$0.5Hz within 2 seconds and to 60Hz$\pm$0.3Hz within 20 seconds''. There is much research on the control of wind turbine generators or energy
storage systems for economic or stability purposes, while incorporating temporal logic constraints into the controller
synthesis problem is still a novel approach. 

\subsection{Related Works} 	
There have been various methods on how to design
controllers to meet temporal logic specifications in deterministic environment \cite{zhe2017, Zhe_intermittent, zhe_advisory} and stochastic environment
\cite{Nonlinear2017,Wolff2012}. For discrete-time temporal logic specifications such as linear temporal logic (LTL) specifications, the usual approach is to abstract the system as a Markov Decision Process (MDP), then the control design is transformed to a problem of finding the control strategy that maximizes the probability of producing a sequence of states in the MDP satisfying the LTL specification \cite{Lahijanian2012}. For dense-time temporal logic specifications, the system can be abstracted as a timed automaton \cite{AlurDill1990,Fu2015CDC,zhe_ijcai2019} and the design process reduces to reachability analysis after the timed automaton is constructed. 

\subsection{Contributions and Advantages} 	

\noindent\textbf{1. Feedforward controller synthesis for wind turbine generator and energy storage systems with stochastic wind variations and uncertainties in the initial state with respect to MTL specifications:} In this paper, we present a controller synthesis approach for metric temporal logic specifications in stochastic environment by designing the controller for 
the trajectory of the diffusionless version of the process with robustness against initial state variations and stochastic uncertainties. Our work is motivated by the works of probabilistic testing for stochastic systems in \cite{Julius2008CDC,Prajna2004}. We present the stochastic control bisimulation function, which bounds the divergence of the trajectories of the stochastic control system and the nominal deterministic control system in a probabilistic fashion. Thus all the controller synthesis methods for the deterministic system can be used for designing the optimal input signals, and the same input signals can be applied to the stochastic system with a probabilistic guarantee. A preliminary version of this paper appeared in conference proceedings \cite{zheACC2018wind} where we present a coordinated control method of wind turbine generator and energy storage system for frequency regulation under temporal logic
specifications. In this paper, we extend the results to switched stochastic control systems, and generalized the predicates of the MTL specifications to include both the state and the input so that line power constraints can be incorporated into the MTL specifications. Besides, we add an exponential term to the (stochastic) autobisimulation function so that both stable and unstable linear dynamics can be approached with less conservativeness.

\noindent\textbf{2. Feedback controller synthesis for power systems with respect to MTL specifications for wind turbine generator and energy storage systems with stochastic wind variations:} To account for unexpected disturbances, we generate a feedback control law from the data of the simulated trajectories to form a feedback controller. We apply the controller synthesis method in regulating the grid frequency of both a four-bus system and a nine-bus system by the coordinated control of wind turbine generators and energy storage systems. We test the effectiveness of the controllers with the stochastic wind generation after a large generation loss disturbance, and also the case when unexpected disturbances are added to the situation. Simulations have shown that the trajectories with the feedforward and the feedback controllers can satisfy the MTL specification with a probabilistic guarantee. Besides, simulations show that when even unexpected wind gust disturbances occur, the feedback controller has better performance in comparison with the feedforward controller.

	\section{Preliminaries} 
	\label{Preliminaries}  
\subsection{Switched Stochastic Control Systems}
\label{switch} 
\begin{definition}[Switched Stochastic Control Systems]
A \textbf{switched stochastic control system} is a 6-tuple $\mathcal{T} = (\mathcal{Q},\mathcal{X},\mathcal{X}_0,\mathcal{V},\mathcal{F},\mathcal{E})$ where
\begin{itemize}
	\item $\mathcal{Q}=\{1,2,\dots,M\}$ is the
	set of indices for the modes (or subsystems);
	\item $\mathcal{X}$ is the domain of the continuous state, $x\in \mathcal{X}$ is the continuous state of the system, $\mathcal{X}_0\subset\mathcal{X}$ is the initial set of states;	
	\item $\mathcal{V}$ is the domain of the input, $u\in\mathcal{V}$ is the input of the system;		
	\item $\mathcal{F}=\{(f_{q},g_{q})\vert q\in \mathcal{Q}\}$ where $f_{q}$ describes the continuous
	time-invariant dynamics for the mode $dx=f_{q}(x,u)dt+g_{q}(x,u)dw$, which
	admits a unique solution $\xi_{q}(t;x_{q}^0,u)$, where $\xi_{q
	}$ satisfies $d\xi_{q}(t;x_{q}^0,u)= f_{q}(\xi_{q}(t;x_{q}^0,u),u)dt+g_{q}(\xi_{q}(t;x_{q}^0,u),u)dw$, and $\xi_{q}(0;x_{q}^0,u)=x_{q}^0$ is an initial condition in mode $q$;
	\item $\mathcal{E}$ is a subset of $\mathcal{Q}\times \mathcal{Q}$
	which contains the valid transitions. If a transition $e = (q, q')\in\mathcal{E}$ takes place, the system switches from mode $q$ to $q'$.
\end{itemize}
\label{sw}
\end{definition}                      
Similarly, we can define the \textbf{switched nominal control system} $\mathcal{T}^{\ast} = (\mathcal{Q},\mathcal{X},\mathcal{X}_0,\mathcal{V},\mathcal{F}^{\ast},\mathcal{E})$ of $\mathcal{T}$, and $\mathcal{T}^{\ast}$  only differs from $\mathcal{T}$ as $\mathcal{F}^{\ast}=\{f_{q}\vert q\in \mathcal{Q}\}$, where $dx^{\ast}=f_{q}(x^{\ast},u)dt$ is the nominal deterministic version of $dx=f_{q}(x,u)dt+g_{q}(x,u)dw$.

 We assume that there is a minimal dwell time $\Lambda_{\rm{min}}$ between any two transitions to avoid too frequent transitions. 
  
\begin{definition}[Trajectory]
	\label{def_traj}	
	A trajectory of a stochastic switched control system $\mathcal{T}$ is denoted as a sequence $\rho=\{(q^{i},\xi_{q^{i}}(t;x^{0}_{q^i},u),T^{i})\}_{i=0}^{N_{q}}$ ($N_{q}\in\mathbb{N}$),	
	where  
		\begin{itemize}  
		\item $\forall i\ge0$, $q^{i}\in \mathcal{Q}$, $x^{0}_{q^i}\in\mathcal{X}$ is the initial state at mode $q^i$, $x_0=x^{0}_{q^0}\in \mathcal{X}_0$ is the initial state of the entire trajectory, $x^{i+1}=\xi_{q^{i}}(T^{i};x^{0}_{q^i},u)$ is the initial state at mode $q^{i+1}$;
		\item $\forall i\ge0$, $T^{i}\ge\Lambda_{\rm{min}}>0$ is the dwell time at mode $q^{i}$, while the transition times are $T^0, T^0+T^1, \dots,T^0+T^1+\dots+T^{N_q-1}$;
		\item  $\forall i\ge0$, $(q^{i},q^{i+1})\in \mathcal{E}$.
		\end{itemize} 
\end{definition} 
A trajectory of a switched nominal control system $\mathcal{T}^{\ast}$ can be similarly denoted as a sequence $\rho^{\ast}=\{(q^{i},\xi^{\ast}_{q^{i}}(t;x^{\ast0}_{q^i},u),T^{i})\}_{i=0}^{N_{q}}$ ($N_{q}\in\mathbb{N}$).
	
\begin{definition}[Output Trajectory]
	\label{traj}
For a trajectory $\rho=\{(q^{i},\xi_{q^{i}}(t;x^{0}_{q^i},u),T^{i})\}_{i=0}^{N_{q}}$, we define the output trajectory $s_\rho(\cdot;x_0,u)$ (where we denote $x_0\triangleq x^{0}_{q^0}$ for brevity) as follows:
\[
 	s_\rho(t;x_0,u) =\begin{cases}
 	\xi_{q^0}(t;x_{0},u), ~~~~~~~~~~~~~~~~~~~~~~~~\mbox{if $t<T^0$},\\
 	\xi_{q^i}(t-\displaystyle{\sum_{k=0}^{i-1}}T^k;x^{0}_{q^i},u),\\
 	~~~~~~~~~~\mbox{if $\displaystyle{\sum_{k=0}^{i-1}}T^k\le t<\sum_{k=0}^{i}T^k$, $1\le i\le N_{q}$}.
 	\end{cases} \\
\]                         
\end{definition} 

The output trajectory of a trajectory $\rho^{\ast}=\{(q^{i},\xi^{\ast}_{q^{i}}(t;x^{\ast0}_{q^i},u),T^{i})\}_{i=0}^{N_{q}}$ of a switched nominal control system is denoted as $s_{\rho^{\ast}}(\cdot;x^\ast_0,u)$.
	
\subsection{Metric Temporal Logic (MTL)}
	\label{MTL}   
In this subsection, we briefly review the MTL that are interpreted over continuous-time signals~\cite{FAINEKOScontinous}, the MTL interpreted over discrete-time signals can be found in~\cite{FainekosMTL}. The continuous state of the system we are studying is described by a set of $n$ variables that can be written as a vector $x = \{x_1, x_2, \dots, x_n\}$.
The domain of $x$ is denoted by $\mathcal{X}$. The domain $\mathbb{B} = \{true,false\}$ is the Boolean domain and the time set
is $\mathbb{T} = \mathbb{R}$. The output trajectory $s_\rho(\cdot;x_0,u)$ of a switched system is defined in Sec. \ref{switch}. A set $AP=\{\pi_1,\pi_2,\dots \pi_n\}$ is a set of atomic propositions, each mapping $\mathcal{X}$ to $\mathbb{B}$. The syntax of MTL is defined recursively as follows: 
    \[
	\varphi:=\top\mid \pi\mid\lnot\varphi\mid\varphi_{1}\wedge\varphi_{2}\mid\varphi_{1}\vee
	\varphi_{2}\mid\varphi_{1}\mathcal{U}_{\mathcal{I}}\varphi_{2},   
	\]                           
where $\top$ stands for the Boolean constant True, $\pi$ is an atomic
proposition, $\lnot$ (negation), $\wedge$ (conjunction), $\vee$ (disjunction)
are standard Boolean connectives, $\mathcal{U}$ is a temporal operator
representing \textquotedblleft until\textquotedblright, $\mathcal{S}$ is a past temporal operator
representing \textquotedblleft since\textquotedblright, $\mathcal{I}$ is a time interval of
the form $\mathcal{I}=[i_{1},i_{2}]~(i_{1},i_{2}\in \mathbb{R}_{\geqslant 0}, i_{1}\le i_{2})$. From \textquotedblleft
until\textquotedblright($\mathcal{U}$), we can derive the temporal operators \textquotedblleft
eventually\textquotedblright~$\Diamond_{\mathcal{I}}\varphi=\top\mathcal{U}_{\mathcal{I}}\varphi$ and
	\textquotedblleft always\textquotedblright~$\Box_{\mathcal{I}}\varphi=\lnot\Diamond_{\mathcal{I}}\lnot\varphi$.
	
	We define the set of states that satisfy the atomic proposition $\pi$ as $\mathcal{O}(\pi)\in \mathcal{X}$. For a set $S\subseteq\mathcal{X}$, we define the signed distance from $x$ to $S$ as
	\begin{equation}
	\textbf{Dist$_d(x,S)\triangleq$}
	\begin{cases}
	-\textrm{inf}\{d(x, y)\vert y\in cl(S)\},& \mbox{if $x$ $\not\in S$};\\  
	\textrm{inf}\{d(x, y)\vert y\in\mathcal{X}\setminus S\}, & \mbox{if $x$ $\in S$},
	\end{cases}                        
	\label{sign}
	\end{equation}
where $d$ is a metric on $\mathcal{X}$ and $cl(S)$ denotes the closure of the set $S$. In this paper, we use the metric $d(x,y)=\norm{x-y}$, where $\left\Vert\cdot\right\Vert $ denotes the 2-norm. 

We use $\left[\left[\varphi\right]\right](s_\rho(\cdot;x_0,u), t)$ to denote the robustness degree of the output trajectory $s_\rho(\cdot;x_0,u)$ with respect to the formula $\varphi$ at time $t$. We denote $-\mathcal{I}\triangleq[-i_{2},-i_{1}]$ when $\mathcal{I}=[i_{1},i_{2}]$. The robust semantics of a formula $\varphi$ with respect to $s_\rho(\cdot;x_0,u)$ are defined recursively as follows~\cite{Dokhanchi2014}:
	\begin{align}  
	\begin{split}
	\left[\left[\top\right]\right](s_\rho(\cdot;x_0,u), t):=& +\infty,\\
	\left[\left[\pi\right]\right](s_\rho(\cdot;x_0,u), t):=&\textbf{Dist$_d(s_\rho(\cdot;x_0,u)(t),\mathcal{O}(\pi))$},\\
	\left[\left[\neg\varphi\right]\right](s_\rho(\cdot;x_0,u), t):=&-\left[\left[ \varphi\right]\right](s_\rho(\cdot;x_0,u), t),\\
	\left[\left[\varphi_1\wedge\varphi_2\right]\right](s_\rho(\cdot;x_0,u), t):=&\min\big(\left[\left[ \varphi_1\right]\right](s_\rho(\cdot;x_0,u), t),\\&\left[\left[\varphi_2\right]\right](s_\rho(\cdot;x_0,u), t)\big),\\
	\left[\left[\varphi_1\mathcal{U}_{\mathcal{I}}\varphi_{2}\right]\right](s_\rho(\cdot;x_0,u), t):=&\max_{t'\in (t+\mathcal{I})}\Big(\min\big(\left[\left[\varphi_2\right]\right](s_\rho(\cdot;x_0,u),\\&  t'), \min_{t\le t''<t'}\left[\left[\varphi_1\right]\right]
	(s_\rho(\cdot;x_0,u),t'')\big)\Big).	 
	\end{split}
	\end{align}

 If the output trajectory $s_\rho(\cdot;x_0,u)$ is only defined on a finite time interval ($\triangleq \mathbb{T}_o$), then the time domain of the formula $\varphi$ is also finite. It is defined recursively as follows \cite{zhe2016}:                                                                                          
\begin{align}                                                            
\begin{split}
\textcolor{black}{D(\pi, s_\rho(\cdot;x_0,u))}& =\mathbb{T}_o,\\
\textcolor{black}{D(\lnot\varphi, s_\rho(\cdot;x_0,u))} & \textcolor{black}{=D(\varphi, s_\rho(\cdot;x_0,u))},\\
\textcolor{black}{D(\varphi_{1}\wedge\varphi_{2}, s_\rho(\cdot;x_0,u))} & \textcolor{black}{=D(\varphi_{1}, s_\rho(\cdot;x_0,u))}\\&~~~~\textcolor{black}{\cap D(\varphi_{2}, s_\rho(\cdot;x_0,u))},\\
\textcolor{black}{D(\varphi_{1}\mathcal{U}_{\mathcal{I}}\varphi_{2}, s_\rho(\cdot;x_0,u))} & =\{t\vert t+\mathcal{I} \subset (D(\varphi_{1}, s_\rho(\cdot;x_0,u))\\&~~~~\cap D(\varphi_{2}, s_\rho(\cdot;x_0,u)))\}.
\end{split}             
\end{align}
                        
\section{Stochastic Control Bisimulation Function}
\subsection{Stochastic Control Bisimulation Function}
We consider the switched stochastic control system with the following dynamics in the mode $q$:
\begin{align}
\begin{split}
& dx=f_q(x,u)dt+g_q(x,u)dw, 
\end{split}
\label{sys}
\end{align}
where the state $x\in\mathcal{X}\in\mathbb{R}^{n}$, the input $u\in\mathcal{V}\in\mathbb{R}^{p}$, $w$ is an $\mathbb{R}^{m}$-valued standard Brownian motion.

Note that the dynamics is essentially the same as that in \cite{Julius2008CDC} when the input signal $u(\cdot)$ is given and bounded, where the existence and uniqueness of the solution of (\ref{sys}) can be guaranteed with the conditions given in \cite{Julius2008CDC}. 

We also consider the switched nominal control system in the mode $q$ as the nominal deterministic version:
\begin{align}
& dx^{\ast}=f_q(x^{\ast},u)dt,  
\label{nom}
\end{align}
 
	\begin{definition}	
		A continuously differentiable function $\psi_q:\mathcal{X}\times\mathcal{X}\times\mathbb{T}\rightarrow \mathbb{R}_{\geqslant 0}$ is a \textbf{time-varying control autobisimulation function} of
		the switched nominal control system (\ref{nom}) in the mode $q$ if for any $x, \tilde{x} \in \mathcal{X}$ ($x\neq\tilde{x}$) and any $t\in\mathbb{T}$ there exists a function $u:\mathbb{R}^n\times\mathbb{T}\rightarrow\mathbb{R}^p$ such that $\psi_q(x,\tilde{x},t)>0$, $\psi_q(x,x,t)=0$ and $\frac{\partial{\psi_q(x,\tilde{x},t)}}{\partial{x}}f_q(x,u(x,t))+\frac{\partial{\psi_q(x,\tilde{x},t)}}{\partial{\tilde{x}}}f_q(\tilde{x},u(\tilde{x},t))+\frac{\partial{\psi_q(x,\tilde{x},t)}}{\partial{t}}\leq0$.
	\end{definition} 
	
In the following, we extend the concept of control autobisimulation function to the stochastic setting.
	
	\begin{definition}	
		A twice differentiable function $\phi_q$: $\mathcal{X}\times\mathcal{X}\rightarrow \mathbb{R}_{\geqslant 0}$ is a \textbf{stochastic control bisimulation function} between (\ref{sys}) and its nominal system (\ref{nom}) if it satisfies \cite{zheACC2018wind}
		\begin{align}
		\begin{split}
		&\phi_q(x,\tilde{x},t)>0, \forall x,\tilde{x}\in \mathcal{X},  x\neq\tilde{x}, \forall t\in \mathbb{T},\\
		&\phi_q(x, x, t)=0, \forall x\in \mathcal{X}, \forall t\in \mathbb{T}, 
		\end{split}
		\end{align}
		and there exist $\alpha_q>0$ and a function
		$u:\mathbb{R}^n\times\mathbb{T}\rightarrow \mathbb{R}^p$ such that 
		\begin{align}
		\begin{split}
		&\frac{\partial\phi_q(x,\tilde{x},t)}{\partial x}f_q(x,u(x,t))+\frac{\partial\phi_q(x,\tilde{x},t)}{\partial \tilde{x}}f_q(\tilde{x},u(\tilde{x},t))\\&+\frac{\partial\phi_q(x,\tilde{x},t)}{\partial t}+\frac{1}{2}g_q^{T}(x,u(x,t))\frac{\partial^{2}\phi_q(x,\tilde{x},t)}{\partial x^{2}}g_q(x,u(x,t))\\&<\alpha_q, 
		\end{split}
		\end{align}
		for any $x, \tilde{x}\in\mathcal{X}$. 
	\end{definition}	

The stochastic control bisimulation function establishes a bound between the trajectories of system (\ref{sys}) and its nominal system (\ref{nom}). 

\subsection{Stochastic Control Bisimulation Function for Switched Linear Dynamics}	
In this subsection, we consider the switched stochastic control system with the following linear dynamics in the mode $q$:
\begin{align}
\begin{split}
& dx=(A_qx+B_qu)dt+\Sigma_q dw,                                                                                      
\end{split}
\label{syslinear}
\end{align}
where $A_q\in\mathbb{R}^{n\times n}$, $B_q\in\mathbb{R}^{n\times p}$, $\Sigma_q\in\mathbb{R}^{n\times m}$.

We can construct a stochastic control bisimulation function of the form
	\begin{center}
		$\phi_q(x, \tilde{x},t)=(x-\tilde{x})^{T}M_q(x-\tilde{x})e^{\mu_q t}$, 
	\end{center}
where $M_q$ is a symmetric positive definite matrix. In order for this function to qualify as a stochastic control bisimulation function, we need to have $M_q\succ 0$, and
	\begin{align}
	\begin{split}
	&\frac{\partial\phi_q(x, \tilde{x}, t)}{\partial x}(A_qx+B_qu)+\frac{\partial\phi_q(x,\tilde{x},t)}{\partial \tilde{x}}(A_q\tilde{x}+B_qu)\\ &+\frac{\partial\phi_q(x,\tilde{x},t)}{\partial t}+
	trace(\frac{1}{2}\Sigma_q^{T}(\frac{\partial^{2}\phi_q(x,\tilde{x},t)}{\partial x^{2}})\Sigma_q)\\ =&(x-\tilde{x})^{T}(2M_qA_q+\mu M_q)(x-\tilde{x})+trace(\Sigma_q^{T}M_q\Sigma_q)\\
	<&\alpha.
	\label{func}
	\end{split}
	\end{align}
	for some $\alpha_q>0$. If we pick $\alpha_q=trace(\Sigma_q^{T}M_q\Sigma_q)$, the inequality (\ref{func}) becomes a linear matrix inequality (LMI)
	\begin{align}
		& A_q^{T}M_q+M_qA_q+\mu M_q\prec 0. 
		\label{LMI} 
	\end{align}
We denote the system trajectory starting from $x_0$ with the input signal $u(\cdot)$ as $\xi({\bm\cdot};x_0,u)$. It can be seen that (\ref{func}) holds for any input signal $u(\cdot)$, so $u(\cdot)$ is free to be designed.
It can also be verified that $\psi_q(x,\tilde{x},t)=\phi_q(x, \tilde{x},t)=(x-\tilde{x})^{T}M_q(x-\tilde{x})$ is also a time-varying control autobisimulation function of the nominal system  
\begin{align}
\begin{split}
& dx^{\ast}=(A_qx^{\ast}+B_qu)dt. 
\end{split}
\label{nomlinear}
\end{align}

\begin{proposition}                                     
	Given the dynamics of (\ref{nomlinear}), $\psi_q(x, \tilde{x},t) = (x-\tilde{x})^TM_q(x-\tilde{x})e^{\mu t}$ is an autobisimulation function if the matrix $M_q$ satisfies the following:
	\begin{align}
	\begin{split}
	&~~~~~~~~M_q\succ 0, ~A_q^TM_q+M_qA_q+\mu M_q\preceq0. 
	\end{split}                                   
	\end{align}                  
\end{proposition}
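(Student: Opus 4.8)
The plan is to verify directly the three defining clauses of a time-varying control autobisimulation function for the candidate $\psi_q(x,\tilde{x},t)=(x-\tilde{x})^{T}M_q(x-\tilde{x})e^{\mu t}$, reusing the computation already carried out for (\ref{func}). The positivity and vanishing clauses are immediate: because $M_q\succ0$ and $e^{\mu t}>0$ for every $t$, the quadratic form $(x-\tilde{x})^{T}M_q(x-\tilde{x})$ is strictly positive when $x\neq\tilde{x}$ and equals $0$ when $x=\tilde{x}$, hence $\psi_q(x,\tilde{x},t)>0$ for $x\neq\tilde{x}$ and $\psi_q(x,x,t)=0$; note that $\mu$ is allowed to have either sign, which is exactly what lets the same construction cover both stable and unstable $A_q$.

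For the dissipation inequality I would first differentiate: $\frac{\partial\psi_q}{\partial x}=2(x-\tilde{x})^{T}M_q e^{\mu t}$, $\frac{\partial\psi_q}{\partial\tilde{x}}=-2(x-\tilde{x})^{T}M_q e^{\mu t}$, and $\frac{\partial\psi_q}{\partial t}=\mu(x-\tilde{x})^{T}M_q(x-\tilde{x})e^{\mu t}$. The one point that deserves a word of justification is the choice of the witnessing input: the definition only asks for \emph{some} $u:\mathbb{R}^{n}\times\mathbb{T}\rightarrow\mathbb{R}^{p}$, so I would take $u$ independent of its state argument (a pure feedforward signal), exactly as in the derivation of (\ref{func}). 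Then $B_q u(x,t)=B_q u(\tilde{x},t)$, so when we form $\frac{\partial\psi_q}{\partial x}f_q(x,u(x,t))+\frac{\partial\psi_q}{\partial\tilde{x}}f_q(\tilde{x},u(\tilde{x},t))$ with $f_q(x,u)=A_qx+B_qu$ the input terms cancel in the difference, leaving $2(x-\tilde{x})^{T}M_qA_q(x-\tilde{x})e^{\mu t}$.

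Adding the time derivative, the left-hand side of the required inequality equals $\big(2(x-\tilde{x})^{T}M_qA_q(x-\tilde{x})+\mu(x-\tilde{x})^{T}M_q(x-\tilde{x})\big)e^{\mu t}$. Since $2(x-\tilde{x})^{T}M_qA_q(x-\tilde{x})$ is a scalar, it equals its own transpose, which lets me symmetrize it to $(x-\tilde{x})^{T}(A_q^{T}M_q+M_qA_q)(x-\tilde{x})$; collecting terms gives $(x-\tilde{x})^{T}(A_q^{T}M_q+M_qA_q+\mu M_q)(x-\tilde{x})e^{\mu t}$. Because $e^{\mu t}>0$ and the hypothesis supplies $A_q^{T}M_q+M_qA_q+\mu M_q\preceq0$, this quantity is $\le0$, which is precisely the dissipation clause, completing the verification. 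In short, there is no genuine obstacle here: the argument is a single differentiation, the cancellation of the $B_q u$ terms enabled by the freedom in choosing $u$, and a scalar symmetrization — the substantive content (why such an $M_q$ exists, and why this bound translates into a trajectory divergence bound) lives in the surrounding LMI discussion and in the stochastic counterpart, not in this proposition.
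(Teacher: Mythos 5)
Your proposal is correct and follows essentially the same route as the paper's proof: verify positivity and vanishing from $M_q\succ 0$ and $e^{\mu t}>0$, then show the dissipation term reduces to $(x-\tilde{x})^{T}(A_q^{T}M_q+M_qA_q+\mu M_q)(x-\tilde{x})e^{\mu t}\le 0$. You are merely more explicit than the paper about two points it leaves implicit — that the $B_q u$ terms cancel because the same (state-independent) input appears in both drift evaluations, and that positivity is only claimed for $x\neq\tilde{x}$ — which is a welcome tightening rather than a different argument.
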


\begin{proof} 
	As $e^{\mu t}>0$, if $M_q\succ 0$, then for any $x, \tilde{x}\in\mathcal{X}$ and any $t$, we have $\psi_q(x, \tilde{x},t) = (x-\tilde{x})^TM_q(x-\tilde{x})e^{\mu t}>0$. If $A_q^TM_q+M_qA_q+\mu M_q\preceq0$, we have for any $x, \tilde{x}\in\mathcal{X}$ and any $t$, 
	\begin{align}\nonumber
	\begin{split}
	&\frac{\partial{\psi_q(x, \tilde{x},t)}}{\partial{x}}f_q(x)+\frac{\partial{\psi_q(x, \tilde{x},t)}}{\partial{\tilde{x}}}f_q(\tilde{x})+\frac{\partial{\psi_q(x, \tilde{x},t)}}{\partial{t}}\\
	&=(x-\tilde{x})^T(A_q^TM_q+M_qA_q+\mu M_q)(x-\tilde{x})e^{\mu t}\le 0.
	\end{split}                                   
	\end{align}   
	So $\psi_q(x, \tilde{x},t) = (x-\tilde{x})^TM_q(x-\tilde{x})e^{\mu t}$ is an autobisimulation function of system (\ref{nomlinear}).    
\end{proof}   

We denote the output trajectory of the nominal system starting from $x_0$ with the input signal $u(\cdot)$ as $s_{\rho^{\ast}}({\bm\cdot};x_0,u)$. 

\begin{proposition}
If $\phi_q$ is a stochastic control bisimulation function between the switched stochastic control system (\ref{syslinear}) and its switched nominal control system (\ref{nomlinear}) in the mode $q$, then for any $T>0$,
\begin{align}
& P\{\sup_{0\leq t\leq T}\phi_q(\xi_q^{\ast}(t;x^0_q,u), \xi_q(t;x^0_q,u))<\gamma\}> 1-\frac{\alpha_q T}{\gamma}.
\label{prob}
\end{align}
\end{proposition}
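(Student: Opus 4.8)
The plan is to run the classical Dynkin / supermartingale argument used for stochastic barrier and bisimulation certificates (as in \cite{Julius2008CDC,Prajna2004}), specialized to the pair process. Introduce the scalar process $Z(t)\triangleq\phi_q(\xi_q^{\ast}(t;x^0_q,u),\xi_q(t;x^0_q,u),t)$. Since both trajectories start from the common point $x^0_q$, we have $Z(0)=\phi_q(x^0_q,x^0_q,0)=0$, and $Z(t)\ge 0$ for all $t$ by positive (semi)definiteness of $\phi_q$. Next, apply It\^o's formula to $\phi_q$ along the joint process: the first argument evolves by the deterministic drift $d\xi_q^{\ast}=f_q(\xi_q^{\ast},u)\,dt$ (no diffusion term), and the second by $d\xi_q=f_q(\xi_q,u)\,dt+g_q(\xi_q,u)\,dw$. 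One checks that the $dt$-coefficient of $dZ$ is precisely the left-hand side of the infinitesimal inequality in the definition of a stochastic control bisimulation function (with the stochastic component playing the role of the variable carrying the second-order term), hence it is bounded above by $\alpha_q$; the $dw$-coefficient is $\frac{\partial\phi_q}{\partial\tilde x}g_q(\xi_q,u)$. Therefore $Z(t)=\int_0^t(\mathcal{L}\phi_q)(s)\,ds+M(t)$ with $M$ a continuous local martingale, $M(0)=0$, and $(\mathcal{L}\phi_q)(s)<\alpha_q$ for all $s$.

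Then I would introduce the hitting time $\tau\triangleq\inf\{t\ge0:Z(t)\ge\gamma\}$. Because $Z$ has continuous sample paths and $Z(0)=0<\gamma$, the event $\{\sup_{0\le t\le T}Z(t)<\gamma\}$ is exactly the complement of $\{\tau\le T\}$, and on $\{\tau\le T\}$ continuity forces $Z(\tau)=\gamma$. Evaluate the stopped process at $T\wedge\tau$: on $[0,T\wedge\tau]$ we have $0\le Z\le\gamma$, so $\phi_q$ and the derivatives entering the $dw$-coefficient remain bounded (in the linear instance $\phi_q(x,\tilde x,t)=(x-\tilde x)^TM_q(x-\tilde x)e^{\mu t}$ with $\Sigma_q$ state-independent this is transparent), and optional stopping makes $M(T\wedge\tau)$ mean zero. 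Hence $E[Z(T\wedge\tau)]=E\big[\int_0^{T\wedge\tau}(\mathcal{L}\phi_q)(s)\,ds\big]<\alpha_q\,E[T\wedge\tau]\le\alpha_q T$, the strict inequality coming from $\mathcal{L}\phi_q<\alpha_q$ together with $T\wedge\tau>0$ a.s. On the other hand $Z(T\wedge\tau)\ge\gamma$ on $\{\tau\le T\}$ and $Z(T\wedge\tau)\ge0$ otherwise, so $E[Z(T\wedge\tau)]\ge\gamma\,P\{\tau\le T\}$. Combining, $P\{\tau\le T\}<\alpha_q T/\gamma$, i.e. $P\{\sup_{0\le t\le T}\phi_q(\xi_q^{\ast}(t;x^0_q,u),\xi_q(t;x^0_q,u))<\gamma\}>1-\alpha_q T/\gamma$.

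The main obstacle is the careful bookkeeping in the It\^o expansion: the paper's notation exchanges the roles of $x$ and $\tilde x$ between the bisimulation-function definition and this proposition, so one has to verify that the drift of $Z$ matches the stated inequality and, in particular, that the second-order (trace) term attaches only to the stochastic trajectory $\xi_q$, since $\xi_q^{\ast}$ contributes no quadratic-variation term. The accompanying technical point is the justification that the stochastic integral is a true (not merely local) martingale after stopping at $\tau\wedge T$, which is immediate here from the uniform bound $Z\le\gamma$ on the stopped interval; for the linear dynamics everything reduces to elementary estimates on quadratic forms, so this step is routine.
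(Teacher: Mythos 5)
Your argument is correct and is essentially the paper's own proof spelled out: the paper simply cites Proposition 2.2 of \cite{Julius2008CDC} together with (\ref{func}), and that cited result is established by exactly the Dynkin/optional-stopping argument you give (It\^o expansion of $Z(t)=\phi_q(\xi_q^{\ast},\xi_q,t)$, drift bounded by $\alpha_q$, stopping at the level-$\gamma$ hitting time, and the Markov-type bound $E[Z(T\wedge\tau)]\ge\gamma P\{\tau\le T\}$). Your observation about the swapped roles of $x$ and $\tilde x$ between the definition and the proposition is a genuine notational wrinkle in the paper, harmless here because the quadratic form is symmetric and $\Sigma_q$ is state-independent.
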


  \begin{proof} 
  Straightforward from Proposition 2.2 of \cite{Julius2008CDC} and (\ref{func}).
  \end{proof}  	
  
It can be seen from (\ref{prob}) that $\phi_q$ provides a probabilistic upper bound for the distance between the states of the switched stochastic control system and its switched nominal control system in the node $q$ in a finite time horizon. We denote $B_{q}(x,\gamma)\triangleq\{\tilde{x}\vert(x-\tilde{x})^TM_q(x-\tilde{x})\le\gamma\}$.
	
\section{Stochastic Controller Synthesis}
	\subsection{Feedforward Controller Synthesis}
	\label{Sec_Feedforward}
We denote the set of states that satisfy the predicate $p$ as 
$\mathcal{O}(p)\subset\mathcal{X}$. In this paper, we consider a fragment of MTL formulas in the following form:
\begin{align} 
\begin{split}
\varphi =& \Box_{[\tau_1,T_{\textrm{end}}]}p_1 \wedge \Box_{[\tau_2,T_{\textrm{end}}]}p_2 \wedge\dots \wedge\Box_{[\tau_{\eta},T_{\textrm{end}}]}p_{\eta},
\end{split}
\label{MTLform1}
\end{align}	
where $\tau_1<\tau_2<\dots\tau_{\eta}\leq T_{\textrm{end}}$, $T_{\textrm{end}}$ is the end of the simulation time, $\mathcal{O}(p_{\eta})\subset\mathcal{O}(p_{\eta-1})\subset\dots\subset\mathcal{O}(p_1)$, each predicate $p_k$ is in the following form:     
\begin{equation}
p_k\triangleq\left(\bigwedge_{\nu=1}^{n_k}a_{k,\nu}^{T}x+c_{k,\nu}^{T}u<b_{k,\nu}\right),
\label{predicate}
\end{equation}
where ${a}_{k,\nu}\in\mathbb{R}^{n}$ and $b_{k,\nu}\in\mathbb{R}$ denote the                    
parameters that define the predicate, $n_k$ is the number of atomic predicates in the           
$k$-th predicate. We constraint $\lVert{a}_{k,\nu}\rVert_{2}=1$ to reduce redundancy.

The MTL formulas in the above-defined form is actually specifying a series of regions to be entered before certain deadlines and stayed thereafter, with larger regions corresponding to tighter deadlines, as shown in Fig. \ref{deadline}. The MTL formulas in this form is especially useful in power system frequency regulations as discussed in Section \ref{sec_power}. 
	
	\begin{figure}
		\centering
		\includegraphics[scale=0.25]{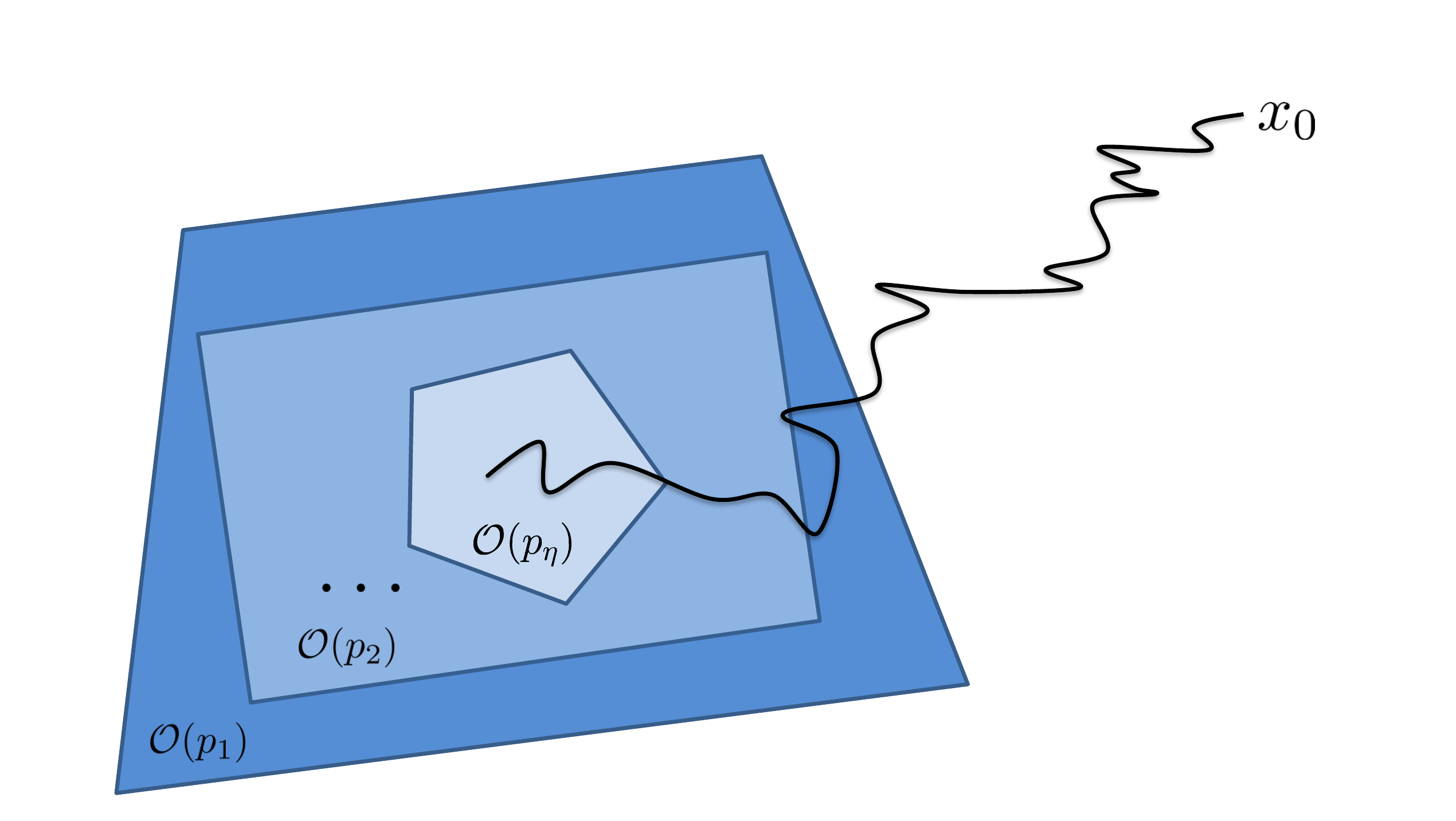}
		\caption{Illustration of the MTL formula $\varphi$ in (\ref{MTLform1}).} 
		\label{deadline}
	\end{figure}
	
The $\delta_{k,\nu}$-robust modified formula $\hat{\varphi}_{\delta}$ is defined as follows:
	\begin{align}
\hat{\varphi}_{\delta}\triangleq& \Box_{[\tau_1,T_{\textrm{end}}]}\hat{p}_1 \wedge \Box_{[\tau_2,T_{\textrm{end}}]}\hat{p}_2 \wedge\dots \wedge\Box_{[\tau_{\eta},T_{\textrm{end}}]}\hat{p}_{\eta},
\label{MTLform2}
	\end{align}
where each predicate $\hat{p}_k$ is modified from (\ref{predicate}) as follows:
\begin{equation}
\hat{p}_k\triangleq\left(\bigwedge_{\nu=1}^{n_k}a_{k,\nu}^{T}x+c_{k,\nu}^{T}u<b_{k,\nu}-\Delta_{k,\nu}(t)\right),
\label{predicate2}
\end{equation} 
where
\[
 	\Delta_{k,\nu}(t) =\begin{cases}
 	\delta^0_{k,\nu}e^{-\mu t/2}, ~~~~~~~~~~~~~~~~~~~~~~~~\mbox{if $t<T^0$},\\
 	\delta^i_{k,\nu}e^{-\mu (t-\sum_{j=0}^{i-1}T^j)/2},\\
 	~~~~~~~~~~\mbox{if $\sum_{j=0}^{i-1}T^j\le t<\sum_{j=0}^{i}T^j$, $1\le i\le N_{q}$}.
 	\end{cases} \\
\]                                                                     
	
	\begin{theorem}	                                                        
		If for every $k\in\{1,\dots,\eta\}$ and $\nu\in\{1,\dots,n_k\}$, there exist $z^i_{k,\nu} (i=0,1,\dots,N_q), \epsilon>0$ such that $(z_{k,\nu}^{i})^2a_{k,\nu}a_{k,\nu}^{T}\preceq M_{q^i}$ and $\left[\left[\varphi_{\hat{\delta}}\right]\right](s_{\rho^{\ast}}({\bm\cdot};x^{\ast}_0,u), 0)\ge 0$, where  $\rho^{\ast}=\{(q^{i},\xi^{\ast}_{q^{i}}(t;x^{\ast 0}_{q^i},u),T^{i})\}_{i=0}^{N_{q}}$ is a trajectory of the nominal system, $\varphi_{\hat{\delta}}$ is the $\hat{\delta}_{k,\nu}$-robust modified formula of $\varphi$, $\hat{\delta}^i_{k,\nu}=(\sqrt{r_{q^i}}+\sqrt{\hat{\gamma}})/z^i_{k,\nu}$, $\hat{\gamma}=\frac{(\max\limits_{i}\alpha_{q^i})\cdot T_{\textrm{end}}}{\epsilon}$, $B_{q^{i-1}}(\xi_{q^{i-1}}(\tau;x^{\ast 0}_{q^{i-1}},u),r_{q^{i-1}}e^{-\mu_{q^{i-1}}T^{i-1}/2})\subset B_{q^{i}}(x^{\ast 0}_{q^i},r_{q^{i}})$, then for any $\tilde{x}_0\in B_{q^0}(x^{\ast}_0,r_{q^0})$, the output trajectory $s_{\tilde{\rho}}({\bm\cdot};\tilde{x}_0,u)$ of trajectory $\tilde{\rho}=\{(q^{i},\xi_{q^{i}}(t;\tilde{x}^{0}_{q^i},u),T^{i})\}_{i=0}^{N_{q}}$ satisfies MTL specification $\varphi$ with probability at least $1-\epsilon$, i.e. $P\{\left[\left[\varphi\right]\right](s_{\tilde{\rho}}({\bm\cdot};\tilde{x}_0,u), 0)\ge 0\}>1-\epsilon$.
		\label{th1}
	\end{theorem}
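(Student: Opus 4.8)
The plan is to propagate the probabilistic bound of Proposition 3 (inequality (\ref{prob})) mode by mode along the switching sequence, and then convert a bound on the bisimulation function into a bound on the predicate values, which feeds directly into the robust semantics of the MTL formula $\varphi$. First I would fix the trajectory $\tilde\rho$ with initial state $\tilde x_0\in B_{q^0}(x^\ast_0,r_{q^0})$ and, for each mode $i$, consider the event $\mathcal A_i\triangleq\{\sup_{0\le t\le T^i}\phi_{q^i}(\xi^\ast_{q^i}(t;x^{\ast0}_{q^i},u),\xi_{q^i}(t;\tilde x^0_{q^i},u))<\hat\gamma\}$. Proposition 3 gives $P(\mathcal A_i)>1-\alpha_{q^i}T^i/\hat\gamma$; summing over $i$ and using $\sum_i T^i = T_{\textrm{end}}$ together with $\hat\gamma=(\max_i\alpha_{q^i})T_{\textrm{end}}/\epsilon$ yields $P(\bigcap_i\mathcal A_i)>1-\epsilon$ by a union bound. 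The nontrivial point here is that the ball radius must be carried correctly across switches: on the event $\bigcap_i\mathcal A_i$, at the switching time the stochastic state lies within $\phi_{q^i}$-distance $\hat\gamma$, i.e. within $B_{q^i}$-radius... after rescaling by the exponential $e^{-\mu_{q^i}T^i/2}$ that appears in $\phi_q$, and the hypothesis $B_{q^{i-1}}(\xi_{q^{i-1}}(\tau;x^{\ast0}_{q^{i-1}},u),r_{q^{i-1}}e^{-\mu_{q^{i-1}}T^{i-1}/2})\subset B_{q^i}(x^{\ast0}_{q^i},r_{q^i})$ is exactly what lets me re-initialize the next mode inside $B_{q^i}(x^{\ast0}_{q^i},r_{q^i})$, so the induction closes.

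Next I would translate the state-distance bound into a predicate-value bound. On $\bigcap_i\mathcal A_i$, at any time $t$ in mode $i$ the stochastic output $s_{\tilde\rho}(t;\tilde x_0,u)$ and the nominal output $s_{\rho^\ast}(t;x^\ast_0,u)$ satisfy $(s_{\tilde\rho}(t)-s_{\rho^\ast}(t))^TM_{q^i}(s_{\tilde\rho}(t)-s_{\rho^\ast}(t))e^{\mu_{q^i}(\cdot)}<\hat\gamma$, and combining this with the initial-ball bound $r_{q^i}$ via the triangle inequality in the $M_{q^i}$-norm gives a total deviation bounded (in the $M_{q^i}$-weighted norm, modulo the $e^{-\mu t/2}$ time factor) by $\sqrt{r_{q^i}}+\sqrt{\hat\gamma}$. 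The condition $(z^i_{k,\nu})^2 a_{k,\nu}a_{k,\nu}^T\preceq M_{q^i}$ is precisely the statement that $|a_{k,\nu}^T y|\le \sqrt{y^TM_{q^i}y}/z^i_{k,\nu}$ for all $y$, so the change in the linear functional $a_{k,\nu}^Tx$ defining each atomic predicate is at most $(\sqrt{r_{q^i}}+\sqrt{\hat\gamma})e^{-\mu t/2}/z^i_{k,\nu}=\hat\delta^i_{k,\nu}e^{-\mu t/2}=\Delta_{k,\nu}(t)$. Since $u$ is the same signal in both trajectories, the term $c_{k,\nu}^Tu$ cancels, so the predicate margin degrades by at most $\Delta_{k,\nu}(t)$.

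Finally I would invoke the design hypothesis $\left[\left[\varphi_{\hat\delta}\right]\right](s_{\rho^\ast}(\cdot;x^\ast_0,u),0)\ge0$: the nominal trajectory satisfies every tightened predicate $\hat p_k$ with margin $\Delta_{k,\nu}(t)$, so after the worst-case degradation just bounded, the stochastic trajectory still satisfies the original predicate $p_k$ on the same time intervals $[\tau_k,T_{\textrm{end}}]$; because $\varphi$ is a finite conjunction of $\Box$-modalities over atomic predicates, the robust semantics (the nested $\min$ over conjunctions and over the $\Box$ time intervals) then gives $\left[\left[\varphi\right]\right](s_{\tilde\rho}(\cdot;\tilde x_0,u),0)\ge0$ on the event $\bigcap_i\mathcal A_i$. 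Hence $P\{\left[\left[\varphi\right]\right](s_{\tilde\rho}(\cdot;\tilde x_0,u),0)\ge0\}\ge P(\bigcap_i\mathcal A_i)>1-\epsilon$. I expect the main obstacle to be the bookkeeping in the switching step — verifying that the $e^{\mu t}$ factor in $\phi_q$ together with the $e^{-\mu t/2}$ factor in $\Delta_{k,\nu}(t)$ and the radius-shrinking in the nesting hypothesis all compose consistently across modes, and that the "$<\hat\gamma$ almost surely on $\mathcal A_i$" event from Proposition 3 (which is stated for the distance between the nominal and stochastic flows from a \emph{common} initial point) is correctly combined with the separate deterministic initial-ball error $r_{q^i}$ without double-counting.
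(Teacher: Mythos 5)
Your proposal is correct and follows essentially the same route as the paper's proof: split the deviation into a deterministic initial-state part (bounded by $\sqrt{r_{q^i}}$ via the autobisimulation function and the nesting hypothesis across switches) plus a stochastic part (bounded by $\sqrt{\hat\gamma}$ via Proposition 3), convert the $M_{q^i}$-norm bound into a bound on $a_{k,\nu}^T x$ using $(z^i_{k,\nu})^2 a_{k,\nu}a_{k,\nu}^T\preceq M_{q^i}$, and then invoke satisfaction of the robustly modified formula by the nominal trajectory. The only cosmetic differences are that you combine the per-mode events with a union bound where the paper multiplies the per-mode probabilities and then applies $(1-c_1)\cdots(1-c_n)\ge 1-\sum_i c_i$ (both give the same $1-\epsilon$), and that your event $\mathcal{A}_i$ should be stated for the nominal and stochastic flows from the common initial point $\tilde x^0_{q^i}$ so that Proposition 3 applies --- an issue you correctly flag yourself and which the triangle-inequality decomposition in your second paragraph resolves.
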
	
	\begin{proof} 
		See Appendix.
	\end{proof}  	
	                       
From Theorem \ref{th1}, if we can design the input signal $u(\cdot)$ such that the nominal trajectory $s_{\rho^{\ast}}({\bm\cdot};x_0,u)$ of the nominal system (\ref{nom}) satisfies the $\hat{\delta}_{k,\nu}$-robust modified formula of $\varphi$ ($\hat{\delta}^i_{k,\nu}\triangleq(\sqrt{r_{q^i}}+\sqrt{\hat{\gamma}})/z^i_{k,\nu}$), then all the trajectories of the stochastic system (\ref{sys}) starting from the initial set $B_{q^0}(x_0,r_{q^0})$ are guaranteed to satisfy the MTL specification $\varphi$ with probability at least $1-\epsilon$. To make the robust modification as tight as possible, for every $k\in\{1,\dots,\eta\}$ and $\nu\in\{1,\dots,n_k\}$, we compute the maximal $z^i_{k,\nu}$ such that $(z_{k,\nu}^{i})^2a_{k,\nu}a_{k,\nu}^{T} \preceq M_{q^{i}}$. We denote the maximal value of $z^i_{k,\nu}$ as $z^{i\ast}_{k,\nu}$, $\hat{\delta}^{i\ast}_{k,\nu}\triangleq(\sqrt{r_{q^{i}}}+\sqrt{\hat{\gamma}})/z^{i\ast}_{k,\nu}$, and the $\hat{\delta}^{\ast}_{k,\nu}$-robust modified formula as $\varphi_{\hat{\delta}^{\ast}}$ (the predicates in $\varphi_{\hat{\delta}^{\ast}}$ are denoted as $\hat{p}^{\ast}_k$). 
	
The optimization problem to find the optimal input signal such that the nominal trajectory satisfies the $\delta^{\ast}_{k,\nu}$-robust modified formula $\varphi_{\hat{\delta}^{\ast}}$ is formulated as follows: 
	\begin{align}
	\begin{split}
	\underset{u(\cdot)}{\argmin} ~ & J(u(\cdot)) \\
	\text{subject to} ~ &\left[\left[\varphi_{\hat{\delta}^{\ast}}\right]\right](s_{\rho^{\ast}}({\bm\cdot};x^{\ast}_0,u), 0)\ge 0.
	\end{split}
	\label{opt}
	\end{align}

The performance measure $J(u(\cdot))$ can be set as the control effort $\norm{u(\cdot)}_2$ or $\norm{u(\cdot)}_1$. For linear systems, the above optimization problem can be converted to a linear program (LP) problem~\cite{BluSTL,sayan2016} and it can be solved efficiently by LP solvers. As we only focus on the time horizon $[0, T_{\textrm{end}}]$ (e.g. the most critical time period in the transient response of the power system), we assume that some other controllers that make $\mathcal{O}(\hat{p}^{\ast}_{\eta})$ a control invariant set for the nominal trajectories will take over after $T_{\textrm{end}}$. 

\subsection{Feedback Controller Synthesis}	
\label{feedbackcontrol}
In this section, we design a feedback control law to replace the optimal input signals of the feedforward controller. The advantage of a feedback controller is that it is more robust to unexpected disturbances. We simulate $N$ trajectories $\rho^{\ast}_{\ell}=\{(q^{i},\xi^{\ast}_{q^{i},\ell}(t;x^{\ast0}_{q^i,\ell},u_{q^i,\ell}),T^{i})\}_{i=0}^{N_{q}}$ of the switched nominal control system ($\ell=1,2,\dots,N$). When the states and inputs of the trajectories are calculated using numeric simulators such as ODE or CVODE, the data are discrete and therefore in the following we use $\xi^{\ast}_{q,\ell}[j^q_t]\triangleq\xi^{\ast}_{q,\ell}(j^q_t;x^{0\ast}_{q,\ell},u_{q,\ell})$ and $u_{q,\ell}[j^q_t]$ ($\ell=1,2,\dots,N$, $j^q_t=0,1,\dots,N^q_t$) to denote the flow solution and the input of the $\ell$th nominal trajectory of the nominal system (\ref{nom}) at the $j^q_t$th time instant of mode $q$, respectively. In order to account for the situation when unexpected disturbances occur, we resolve (\ref{opt}) with a different switching time instant $T^q+\varrho$ in each mode $q$, where $T^q$ is the duration time in mode $q$ for the nominal trajectories, $\varrho\in(0,T^q)$ is a positive number. The obtained control input in the mode $q$ is denoted as $u'_{q,\ell}$, and the corresponding extended time instants are denoted as $N^q_t+1,\dots,N^{'q}_t$.

The algorithm to generate the feedback law is shown in Algorithm \ref{alg}. We apply the following feedback law $\chi_u(x,t)$ which depends both on the current state $x$, the current mode $q$ and the time $t[j^q_t]$ spent in mode $q$:\\
$\chi_u(x,q,t[j^q_t])\triangleq$\\
\begin{equation}
\begin{cases}
u_{q,1}[0],& \mbox{if $(x,t[j^q_t])\in \mathcal{X}^q_{0,1}[j^q_t]\times[0,t[0]]$},\\   
~~~~~~\vdots  & ~~~~~~\vdots\\
u_{q,N}[0],& \mbox{if $(x,t[j^q_t])\in \mathcal{X}^q_{0,N}[j^q_t]\times[0,t[0]]$},\\
~~~~~~\vdots  & ~~~~~~\vdots\\		
u_{q,1}[i^q_t],& \mbox{if $(x,t[j^q_t])\in \mathcal{X}^q_{i^q_t,1}[j^q_t]\times[t[i^q_t]-\varrho,t[i^q_t]]$},\\   
~~~~~~\vdots  & ~~~~~~\vdots\\
u_{q,N}[i^q_t],& \mbox{if $(x,t[j^q_t])\in \mathcal{X}^q_{i^q_t,N}[j^q_t]\times[t[i^q_t]-\varrho,t[i^q_t]]$},\\
~~~~~~\vdots  & ~~~~~~\vdots\\
u_{q,1}[N^q_t],& \mbox{if $(x,t[j^q_t])\in \mathcal{X}^q_{N^q_t,1}[j^q_t]\times[T^q-\varrho,T^q]$},\\   
~~~~~~\vdots  & ~~~~~~\vdots\\
u_{q,N}[N^q_t],& \mbox{if $(x,t[j^q_t])\in \mathcal{X}^q_{N^q_t,N}[j^q_t]\times[T^q-\varrho,T^q]$},\\
~~~~~~\vdots  & ~~~~~~\vdots\\
u'_{q,1}[N^{'q}_t],& \mbox{if $(x,t[j^q_t])\in \mathcal{X}^q_{N^{'q}_t,1}[j^q_t]\times[T^q,T^q+\varrho]$},\\   
~~~~~~\vdots  & ~~~~~~\vdots\\
u'_{q,N}[N^{'q}_t],& \mbox{if $(x,t[j^q_t])\in \mathcal{X}^q_{N^{'q}_t,N}[j^q_t]\times[T^q,T^q+\varrho]$},\\
u_{\hat{\ell}}[\hat{j^q_t}],& \mbox{otherwise},					
\end{cases}                       
\label{cases} 
\end{equation}
where each region $\mathcal{X}^q_{i^q_t,\ell}[j^q_t]$ is changing with time and defined as follows:  
$\mathcal{X}^q_{i^q_t,\ell}[j^q_t]=$                                                                                
\begin{equation}                                                            
\begin{cases}
\hat{\mathcal{X}}^{q}_{i^q_t,\ell}[j^q_t]\setminus\bigcup\limits_{i'_t>i^q_t}\hat{\mathcal{X}}^{q}_{i'_t,\ell}[j^q_t],~~~~~~~~~~~~~~~~~~~~~~~~\mbox{if $\ell=\hat{\ell}[j^q_t]$}, \\   
B_{q}(\xi^{\ast}_{\ell}[i^q_t],\hat{r}_q[j^q_t])\setminus\Big(\big(\bigcup\limits_{i^{q'}_t}\hat{\mathcal{X}}^{q}_{i^{q'}_t,\hat{\ell}}\big)\cup\big(\bigcup\limits_{\ell'>\ell}\bigcup\limits_{i^{q'}_t}B_{q}(\xi^{\ast}_{\ell'}[i^{q'}_t],\hat{r}_q[j^q_t])\big)\\ \cup\big(\bigcup\limits_{i^{q'}_t>i^q_t}B_{q}(\xi^{\ast}_{\ell}[i^{q'}_t],\hat{r}_q[j^q_t])\big)\Big), ~~~~~~~~~~~~~~~\mbox{if $\ell\neq\hat{\ell}[j^q_t]$}.					
\end{cases}        
\label{init}
\end{equation}
where $\hat{r}_q[j^q_t]\triangleq r_qe^{-\mu_{q}t[j^q_t]/2}$, $\hat{\ell}$ is initially assigned as $\min\{\ell\vert \tilde{x}_0\in B_{q^{0}}(x^{\ast}_{0,{\ell}},r)\}$ and then assigned according to lines 5-9 of Algorithm \ref{alg}, $\hat{j^q_t}$ is initially assigned as 0 and then assigned according to lines 5-9 of Algorithm \ref{alg}, $\hat{\mathcal{X}}_{i^q_t,\ell}[j^q_t]=\{x\vert\phi(x,\tilde{x})\le\hat{\gamma},\tilde{x}\in B_{q}(\xi^{\ast}_{\ell}[i^q_t],\hat{r}_q[j^q_t])\}$ is defined as the \textbf{stochastic robust neighbourhood with probability ($1-\epsilon$)}, where $\hat{\gamma}=\frac{\alpha T_{\textrm{end}}}{\epsilon}$.  

\begin{algorithm}
	\begin{algorithmic}[1]
		\caption{Feedback law generation.}				
		\State $\hat{\ell}[1]\gets\min\{\ell\vert \tilde{x}_0\in B_{q^{0}}(x^{\ast}_{0,{\ell}},r)\}$	
		\State For the current mode $q$ and every $i^q_t$, $\ell$, obtain $\mathcal{X}_{i^q_t,\ell}[1]$ according to (\ref{init})  
		\For{each mode $q$ in the order that the nominal trajectories pass} 
		\State $j^q_t\gets 0$, $\hat{j^q_t}\gets 0$							    
		\While{$j^q_t<N^q_t$ ($q$ is the current mode)}	
		\If{there exists $\ell'\neq\hat{\ell}[j^q_t]$ and $i^q_t\ge j^q_t$ such that $x\in \mathcal{X}_{i^q_t,\ell'}[j^q_t]$}
		\State $\hat{\ell}[j^q_t+1]\gets\ell'$, $\hat{j^q_t}\gets i^q_t+1$		
		\Else
		\State $\hat{\ell}[j^q_t+1]\gets\hat{\ell}[j^q_t]$, $\hat{j^q_t}\gets \hat{j^q_t}+1$	
		\EndIf
		\State Obtain $\chi_u(x,q,t[j^q_t])$ according to (\ref{cases})
		\State $j^q_t\gets j^q_t+1$	
	    \State For every $i^q_t$, $\ell$, obtain $\mathcal{X}_{i^q_t,\ell}[j^q_t]$ according to (\ref{init})		
		\EndWhile	  
		\EndFor  
		\label{alg}  
	\end{algorithmic}
\end{algorithm}	

As shown in Fig. \ref{redblue}, for the two possible trajectories (realizations) of the stochastic system, the trajectory (realization) in Fig. \ref{redblue} (a) stays within the stochastic robust neighbourhoods around the nominal trajectory $\xi^{\ast}_{\ell}[\cdot]$ throughout the time while the trajectory (realization) in Fig. \ref{redblue} (b) exits the stochastic robust neighbourhoods around $\xi^{\ast}_{\ell}[\cdot]$ and enters $\mathcal{X}_{i^q_t,\ell'}[j^q_t]$ at time instant $j^q_t$, thus $\hat{\ell}$, $\hat{j^q_t}$ and the following inputs are changed according to Algorithm \ref{alg} such that the trajectory stays within the stochastic robust neighbourhoods around the nominal trajectory $\xi^{\ast}_{\ell'}[\cdot]$ with probability ($1-\epsilon$) since time instant $j^q_t$. We use $s_{\rho}(\cdot;\tilde{x}_0,\chi_u)$ to denote the output trajectory of the stochastic system starting from $\tilde{x}_0$ when the feedback control law $\chi_u(x,t)$ is applied. 

	\begin{theorem}	
		If for every $\ell$, $\left[\left[\varphi_{\hat{\delta}^{\ast}}\right]\right](s_{\rho_{\ell}^{\ast}}({\bm\cdot};x^{\ast}_{0,\ell},u_{\ell}), 0)\ge 0$ ($\varphi_{\hat{\delta}^{\ast}}$ is the $\hat{\delta}^{\ast}_{k,\nu}$-robust modified formula, where $\hat{\delta}^{i\ast}_{k,\nu}=(\sqrt{r_{q}^i}+\sqrt{\hat{\gamma}})/z^{i\ast}_{k,\nu}$, $\hat{\gamma}=\frac{(\max\limits_{i}\alpha_{q^i})\cdot T_{\textrm{end}}}{\epsilon}$), then for any $\tilde{x}_0\in\bigcup\limits_{\ell=1}^{N}B_{q^{0}}(x^{\ast}_{0,{\ell}},r)$ the trajectory $\xi({\bm\cdot};\tilde{x}_0,\chi_u)$ satisfies the MTL specification $\varphi$ with probability at least $1-\epsilon$, i.e. $P\displaystyle\{\left[\left[\varphi\right]\right](s_{\rho}({\bm\cdot};\tilde{x}_0,\chi_u), 0)\ge 0\}>1-\epsilon$.            
		\label{th2}                                      
	\end{theorem}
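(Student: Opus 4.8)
The plan is to reduce Theorem~\ref{th2} to the argument behind Theorem~\ref{th1}, the only genuinely new ingredient being that the feedback law (\ref{cases})--(\ref{init}) may switch which precomputed nominal trajectory $\xi^{\ast}_{\ell}$ is being tracked. Concretely, I would show that with probability at least $1-\epsilon$ the feedback-controlled trajectory $\xi({\bm\cdot};\tilde{x}_0,\chi_u)$ stays, for every $t\in[0,T_{\textrm{end}}]$, inside the stochastic robust neighbourhood $\hat{\mathcal{X}}^{q}_{i,\ell}[\cdot]$ of whichever nominal trajectory is currently being tracked, and that on this event $\left[\left[\varphi\right]\right](s_{\rho}({\bm\cdot};\tilde{x}_0,\chi_u),0)\ge 0$.

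To set up the tracking, fix $\tilde{x}_0\in\bigcup_{\ell}B_{q^0}(x^{\ast}_{0,\ell},r)$ and, as in line~1 of Algorithm~\ref{alg}, let $\hat{\ell}=\min\{\ell\mid\tilde{x}_0\in B_{q^0}(x^{\ast}_{0,\ell},r)\}$, so $(\tilde{x}_0-x^{\ast}_{0,\hat{\ell}})^{T}M_{q^0}(\tilde{x}_0-x^{\ast}_{0,\hat{\ell}})\le r$. I would introduce an \emph{anchor process} $\Pi({\bm\cdot})$: the solution of the nominal dynamics (\ref{nomlinear}) issued from the most recent re-anchoring point under the input that $\chi_u$ applies from then on (initially $\Pi$ starts at $\tilde{x}_0$ under $u_{\hat{\ell}}$). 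Between two consecutive re-anchorings, $\xi({\bm\cdot};\tilde{x}_0,\chi_u)$ and $\Pi$ obey the same input, so by (\ref{func}) and Proposition~2.2 of \cite{Julius2008CDC} the process $\phi_{q}(\xi(t),\Pi(t))-\alpha_{q}t$ has non-positive drift. The construction (\ref{cases})--(\ref{init}) is arranged so that a re-assignment (line~7 of Algorithm~\ref{alg}) occurs only when $\xi(t)$ enters one of the \emph{core} balls $B_{q}(\xi^{\ast}_{\ell'}[i],\hat{r}_{q}[\cdot])$; hence $\chi_u$ re-anchors $\Pi$ to a point where $\phi_{q}$ is small, $\phi_{q}(\xi(\cdot),\Pi(\cdot))$ jumps \emph{down} at each re-anchoring, and $\phi_{q}(\xi(t),\Pi(t))-\alpha_{q}t$ stays a supermartingale over the whole horizon (mode switches being handled by the nested-ball re-anchoring already used in Theorem~\ref{th1}). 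The optional-stopping estimate underlying (\ref{prob}), applied once with the single budget $\hat{\gamma}=(\max_i\alpha_{q^i})T_{\textrm{end}}/\epsilon$, then gives $P\{\sup_{0\le t\le T_{\textrm{end}}}\phi_{q}(\xi(t),\Pi(t))<\hat{\gamma}\}>1-\epsilon$; call this event $G$.

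On $G$ I would repeat the geometric/predicate computation of the proof of Theorem~\ref{th1}. The autobisimulation property (the LMI (\ref{LMI})) bounds the deviation of $\Pi(t)$ from the precomputed nominal point $\xi^{\ast}_{\ell'}[i]$ it is anchored to by the shrinking radius $\hat{r}_{q}[\cdot]$ (the ``$\sqrt{r}$'' term of the margin), and adding the event $G$ (the ``$\sqrt{\hat{\gamma}}$'' term) through the triangle inequality for $v\mapsto\sqrt{v^{T}M_{q}v}$ yields $(\xi(t)-\xi^{\ast}_{\ell'}[i])^{T}M_{q}(\xi(t)-\xi^{\ast}_{\ell'}[i])\le(\sqrt{r}+\sqrt{\hat{\gamma}})^{2}e^{-\mu t}$, with the exponential weights matched to those in (\ref{predicate2}). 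Using $(z^{i\ast}_{k,\nu})^{2}a_{k,\nu}a_{k,\nu}^{T}\preceq M_{q^i}$ this becomes the scalar bound $|a_{k,\nu}^{T}(\xi(t)-\xi^{\ast}_{\ell'}[i])|\le\hat{\delta}^{i\ast}_{k,\nu}e^{-\mu t/2}$, which, added to the modified-predicate inequality $a_{k,\nu}^{T}\xi^{\ast}_{\ell'}[i]+c_{k,\nu}^{T}u_{\ell'}<b_{k,\nu}-\hat{\delta}^{i\ast}_{k,\nu}e^{-\mu t/2}$ guaranteed by the hypothesis $\left[\left[\varphi_{\hat{\delta}^{\ast}}\right]\right](s_{\rho_{\ell'}^{\ast}}({\bm\cdot};x^{\ast}_{0,\ell'},u_{\ell'}),0)\ge0$ and combined with the fact that $\chi_u$ applies exactly $u_{\ell'}$ while $\xi$ is in the tube (so the $c_{k,\nu}^{T}u$ terms cancel), gives $a_{k,\nu}^{T}\xi(t)+c_{k,\nu}^{T}\chi_u<b_{k,\nu}$, i.e.\ $\xi(t)\in\mathcal{O}(p_k)$ at the relevant times. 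The restriction $i^q_t\ge j^q_t$ in Algorithm~\ref{alg} (only forward re-assignments), together with the nesting $\mathcal{O}(p_\eta)\subset\dots\subset\mathcal{O}(p_1)$ and $\tau_1<\dots<\tau_\eta$, ensures every predicate active at time $t$ is among those satisfied at the tracked index, so $\left[\left[\varphi\right]\right](s_{\rho}({\bm\cdot};\tilde{x}_0,\chi_u),0)\ge0$ holds on $G$ and $P\{\left[\left[\varphi\right]\right](s_{\rho}({\bm\cdot};\tilde{x}_0,\chi_u),0)\ge0\}\ge P(G)>1-\epsilon$.

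The step I expect to be the main obstacle is the bookkeeping that makes the anchor process well defined and the reduction self-consistent: one must unroll the nested set-subtractions in (\ref{init}) and the branching of Algorithm~\ref{alg} to check (i) that while $\xi$ stays inside the tube $\hat{\mathcal{X}}^{q}_{\cdot,\hat{\ell}}[\cdot]$ the law $\chi_u$ returns precisely the nominal input $u_{\hat{\ell}}[j^q_t]$, so that the feedback SDE and the fixed-input SDE have the same solution by pathwise uniqueness; (ii) that every re-assignment re-anchors $\Pi$ into a core ball where $\phi_q$ is small; and (iii) that the exponential weights in $\hat{r}_q[j^q_t]$, in $\Delta_{k,\nu}(t)$ and in $\phi_q$ remain consistent across mode switches. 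Once this is in place, the autobisimulation contraction, the Doob-type estimate behind (\ref{prob}), and the predicate-margin algebra are exactly those already used for Theorem~\ref{th1}.
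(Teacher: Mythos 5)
Your proposal reaches the right conclusion, and its load-bearing step is ultimately the same as the paper's, but you take a considerably longer route to it. The paper's proof of Theorem~\ref{th2} is a direct reduction to Theorem~\ref{th1}: it sets $\ell_0=\min\{\ell\mid\tilde x_0\in B_{q^0}(x^{\ast}_{0,\ell},r)\}$, observes that on the event that the stochastic trajectory stays within the $\hat\gamma$-tube of the nominal trajectory $\ell_0$ in every mode the feedback law (\ref{cases}) returns exactly the feedforward input $u_{\ell_0}[\cdot]$ at every time step (so the feedback-controlled path coincides with the feedforward-controlled one), and then applies the per-mode product bound from the proof of Theorem~\ref{th1} to show that this event has probability greater than $1-\epsilon$. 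The re-assignment machinery of Algorithm~\ref{alg} plays no role in the paper's proof of Theorem~\ref{th2}; it is only exercised in Theorem~\ref{th3}. Your anchor process and the ``downward jump at re-anchoring'' supermartingale argument are designed to handle re-assignments \emph{inside} the good event $G$, but note that by (\ref{init}) the sets $\mathcal{X}^q_{i,\ell'}$ for $\ell'\neq\hat\ell$ exclude the entire tube $\bigcup_{i'}\hat{\mathcal{X}}^{q}_{i',\hat\ell}$ of the currently tracked trajectory; hence on $G$, where $\xi$ never leaves that tube, line 6 of Algorithm~\ref{alg} never fires and no re-anchoring ever occurs. Your extra construction is therefore vacuous on the event that carries the probability mass, and once this is noticed your argument collapses to the paper's. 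If you kept your formulation you would still need to repair two details: the ``single budget $\hat\gamma$ over the whole horizon'' version of the Doob-type estimate does not go through directly across mode switches, because $\phi_{q^i}$ uses a different matrix $M_{q^i}$ and need not be dominated by $\phi_{q^{i-1}}$ at the switching instant --- the paper instead restarts the budget in each mode, multiplies the per-mode bounds, and uses $(1-c_1)\cdots(1-c_n)\ge 1-\sum_i c_i$; and the cancellation of the $c_{k,\nu}^{T}u$ terms that you invoke is automatic precisely because $\chi_u$ returns the same $u_{\ell_0}$ as the feedforward controller, which is the observation the paper makes explicit rather than something obtained from the tube geometry.
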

   \begin{proof} 
   	See Appendix.
   \end{proof}  
   	
			\begin{figure}
				\centering
				\includegraphics[scale=0.3]{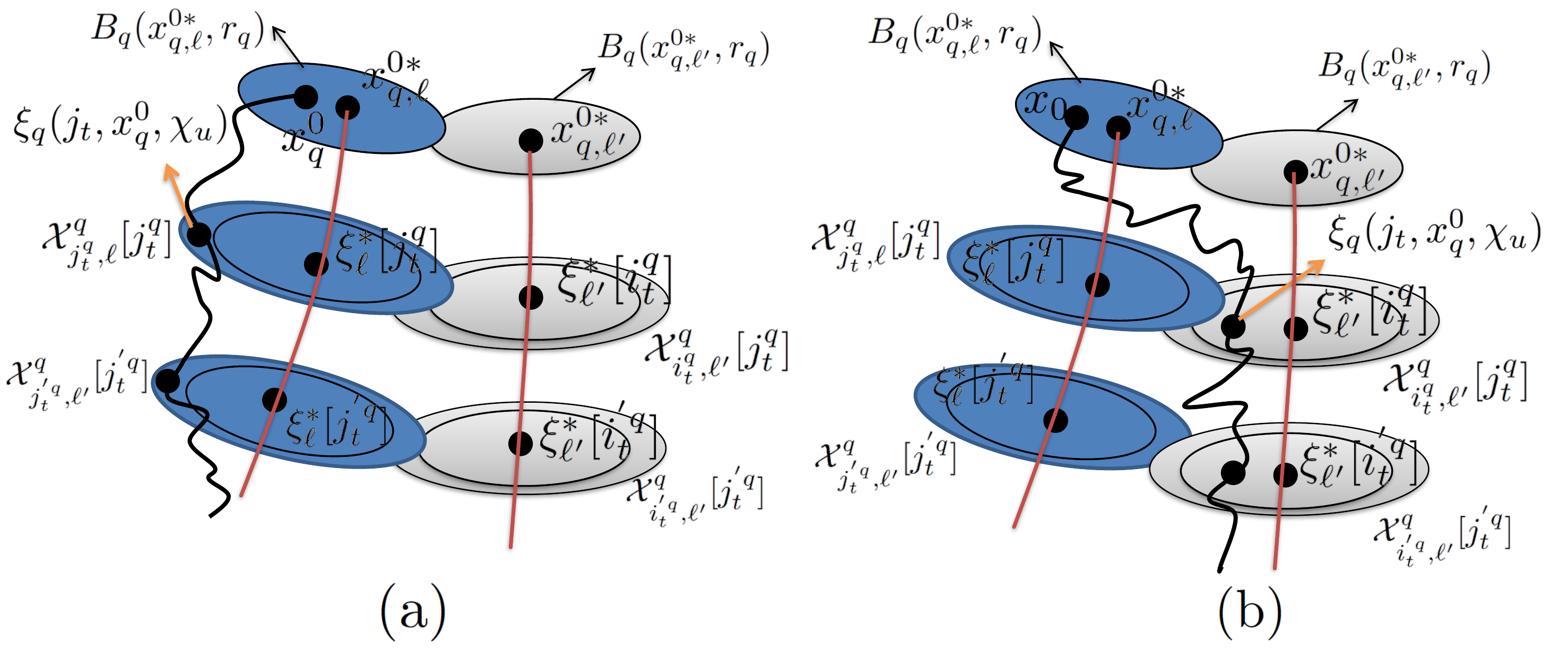}
				\caption{Two possible trajectories (realizations) of the stochastic system (black), two nominal trajectories (brown) and the stochastic robust neighbourhoods around the nominal trajectories.}
				\label{redblue}
			\end{figure}   
While Theorem \ref{th2} gives provable probabilistic guarantees when there is no unexpected disturbance, the following theorem considers the situation when unexpected disturbance occurs.

	\begin{theorem}	
    	Assume that for every $\ell$, $\left[\left[\varphi_{\hat{\delta}^{\ast}}\right]\right]$ $(s_{\rho_{\ell}^{\ast}}({\bm\cdot};x^{\ast}_{0,\ell},u_{\ell}), 0)\ge 0$ ($\varphi_{\hat{\delta}^{\ast}}$ is defined in the same way as in Theorem \ref{th2}). For any $\tilde{x}_0\in\bigcup\limits_{\ell=1}^{N}B_{q^{0}}(x^{\ast}_{0,{\ell}},r)$, if at a time instant $j^q_t$ in mode $q$, unexpected disturbances can perturb the state $x$ to another state $x'\in\bigcup\limits_{\ell=1}^{N}B_{q^{\hat{i}}}(\xi^{\ast}_{\ell}[i^q_t],r_{q^{\hat{i}}})$, where $j^q_t\le i^q_t\le j^q_t+\varrho$, and $\forall i, \bigcup_{t[j^{q^{i-1}}_t]\in T^{i-1}+[0,\varrho]}B_{q^{i-1}}(\xi^{\ast}_{q^{i-1}}[j^{q^{i-1}}_t],\hat{r}_{q^{i-1}}[j^{q^{i-1}}_t])\subset B_{q^{i}}(x^{\ast 0}_{q^i},r_{q^{i}})$, then the trajectory $s_{\rho}({\bm\cdot};\tilde{x}_0,\chi_u)$ still satisfies the MTL specification $\varphi$ with probability at least $1-\epsilon$.                  
    	\label{th3}                                
	\end{theorem}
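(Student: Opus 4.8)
The plan is to reduce Theorem~\ref{th3} to the disturbance-free analysis behind Theorem~\ref{th2}, by showing that an admissible unexpected disturbance only moves the closed-loop realization into a configuration from which the feedback law $\chi_u$ can re-anchor to one of the nominal trajectories and resume tracking without loss in the probability bound. First I would fix $\tilde{x}_0\in\bigcup_{\ell}B_{q^{0}}(x^{\ast}_{0,\ell},r)$, run the closed loop under $\chi_u$, and let $j^q_t$ be the time instant in mode $q$ at which the disturbance perturbs the state to $x'$. By hypothesis $x'\in\bigcup_{\ell}B_{q^{\hat{i}}}(\xi^{\ast}_{\ell}[i^q_t],r_{q^{\hat{i}}})$ with $j^q_t\le i^q_t\le j^q_t+\varrho$; this is exactly the range of indices for which the time-varying regions $\mathcal{X}^q_{i^q_t,\ell}[j^q_t]$ of~(\ref{init}) are defined, since~(\ref{opt}) was re-solved with switching time $T^q+\varrho$ and the extended samples $u'_{q,\ell}[N^q_t+1],\dots,u'_{q,\ell}[N^{'q}_t]$ were generated for precisely this purpose. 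I would then argue that $x'$ lies in some $\mathcal{X}^q_{i^q_t,\ell}[j^q_t]$, so lines~5--9 of Algorithm~\ref{alg} reassign $\hat{\ell}\gets\ell$ and $\hat{j^q_t}\gets i^q_t+1$, and from the instant $j^q_t$ onward $\chi_u$ applies the (possibly extended) nominal input of the $\ell$-th trajectory starting at its $i^q_t$-th sample, as in the corresponding case of~(\ref{cases}).

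Next I would use the probabilistic bound~(\ref{prob}): starting from $x'\in B_{q^{\hat{i}}}(\xi^{\ast}_{\ell}[i^q_t],r_{q^{\hat{i}}})$ and driving the stochastic dynamics with the nominal input of trajectory $\ell$, we get $P\{\sup_{t}\phi_{q}(\xi^{\ast}_{q}(t),\xi_{q}(t))<\hat{\gamma}\}>1-\alpha_{q}T_{\textrm{end}}/\hat{\gamma}\ge 1-\epsilon$ for $\hat{\gamma}=(\max_{i}\alpha_{q^i})T_{\textrm{end}}/\epsilon$, so with probability at least $1-\epsilon$ the realization stays inside the stochastic robust neighbourhood $\hat{\mathcal{X}}_{i,\ell}[\cdot]$ around $\xi^{\ast}_{\ell}[\cdot]$ for the remainder of mode $q$. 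The genuinely new bookkeeping is at the mode switches: because of the disturbance and the re-anchoring, the realization may cross into the next mode as late as $T^{q}+\varrho$ rather than $T^{q}$. The added hypothesis that for every $i$ the extended ball $\bigcup_{t\in T^{i-1}+[0,\varrho]}B_{q^{i-1}}(\xi^{\ast}_{q^{i-1}}[\cdot],\hat{r}_{q^{i-1}}[\cdot])$ is contained in $B_{q^{i}}(x^{\ast 0}_{q^i},r_{q^{i}})$ is the extended analogue of the containment condition of Theorem~\ref{th1}, and it guarantees that wherever in the extended window the switch occurs, the state enters the initial ball of the next mode; hence the tracking/recovery argument iterates mode by mode with exactly the same probability accounting as in the proof of Theorem~\ref{th2}.

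Finally I would transfer nominal robustness to stochastic satisfaction of $\varphi$: on the event of probability at least $1-\epsilon$ identified above, at every time $t$ the realization differs from the currently anchored nominal trajectory by a vector of $M_{q^i}$-weighted norm at most $\sqrt{r_q^i}+\sqrt{\hat{\gamma}}$ (the $\sqrt{r_q^i}$ from the re-anchoring ball, the $\sqrt{\hat{\gamma}}$ from~(\ref{prob})), so the margin condition $(z^{i\ast}_{k,\nu})^{2}a_{k,\nu}a_{k,\nu}^{T}\preceq M_{q^i}$ together with $\hat{\delta}^{i\ast}_{k,\nu}=(\sqrt{r_q^i}+\sqrt{\hat{\gamma}})/z^{i\ast}_{k,\nu}$ makes $a_{k,\nu}^{T}x+c_{k,\nu}^{T}u<b_{k,\nu}$ follow from $a_{k,\nu}^{T}x^{\ast}+c_{k,\nu}^{T}u<b_{k,\nu}-\Delta_{k,\nu}(t)$, exactly the implication used in the proof of Theorem~\ref{th1}. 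Since each $\rho^{\ast}_{\ell}$ satisfies $\varphi_{\hat{\delta}^{\ast}}$ by assumption and the currently anchored nominal trajectory is always one of these (and a suffix of an always-requirement is again an always-requirement), the realization satisfies $\varphi$, giving $P\{\left[\left[\varphi\right]\right](s_{\rho}({\bm\cdot};\tilde{x}_0,\chi_u),0)\ge 0\}>1-\epsilon$. I expect the main obstacle to be the first paragraph: rigorously verifying that the time-varying partition~(\ref{init}) together with the extended samples $u'_{q,\ell}$ covers every admissible disturbed state $x'$, that Algorithm~\ref{alg} re-anchors to a consistent pair $(\hat{\ell},\hat{j^q_t})$, and that the extended containment hypothesis closes the mode-by-mode induction with no additional union bound over modes or over the disturbance time, while the robust-modification clock in $\Delta_{k,\nu}(\cdot)$ remains conservative after re-anchoring.
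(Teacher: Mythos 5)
Your overall architecture matches the paper's: re-anchor via Algorithm~\ref{alg} to a nominal trajectory $\xi^{\ast}_{\ell}[\cdot]$ after each admissible disturbance, use the stochastic control bisimulation bound~(\ref{prob}) on each disturbance-free segment, invoke the extended containment hypothesis to pass between modes, and transfer the $\hat{\delta}^{i\ast}_{k,\nu}=(\sqrt{r_{q}^i}+\sqrt{\hat{\gamma}})/z^{i\ast}_{k,\nu}$ margin exactly as in the proof of Theorem~\ref{th1}. The paper's proof additionally allows several disturbances per mode (at instants $j_t^{q,1},\dots,j_t^{q,m_q}$) and makes the post-disturbance inputs explicit via the time-shifted signals $R^{\tau}u_{\ell}$, but that is a routine generalization of what you describe.

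The genuine gap is in your second paragraph, and it is precisely the point you flag as ``the main obstacle'' without resolving: you bound each post-disturbance segment by $P\{\sup_t \phi_q<\hat{\gamma}\}>1-\alpha_q T_{\textrm{end}}/\hat{\gamma}\ge 1-\epsilon$. That is the full-horizon bound applied to a single segment; the satisfaction event for the whole trajectory is the \emph{intersection} of such events over all inter-disturbance segments in all modes, and intersecting $K$ events each of probability at least $1-\epsilon$ only yields $1-K\epsilon$, not the claimed $1-\epsilon$. The paper closes this by charging each segment only its own duration: a segment of length $\Delta T_j$ fails with probability at most $\alpha\,\Delta T_j/\hat{\gamma}$ by Proposition~2 restarted from the re-anchored initial condition, the segment lengths telescope to $T_{\textrm{end}}$, and the elementary inequality $(1-c_1)(1-c_2)\cdots(1-c_n)\ge 1-(c_1+\cdots+c_n)$ then gives a total success probability of at least $1-\alpha T_{\textrm{end}}/\hat{\gamma}=1-\epsilon$ with no union-bound loss over modes or over disturbance times. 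Without this duration-proportional accounting your argument does not deliver the stated constant $1-\epsilon$; with it, the rest of your proposal goes through.
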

	
  \begin{proof} 
  	See Appendix.
  \end{proof}

\section{Wind Turbine Generator Controller Synthesis}
\label{sec_power}
\subsection{Case Study I}
	In this section, we apply the controller synthesis method in designing a coordinated controller for regulating the grid frequency of a four-bus system with a 600 MW thermal plant $\textrm{G}_{1}$
	made up of four identical units, a wind farm $\textrm{G}_{2}$ consisting of 200 identical 1.5 MW Type-C wind turbine generators (WTG) and an energy storage system (ESS), as shown in Fig. \ref{dfig}. The configuration parameters of each Type-C WTG can be found in Appendix B of \cite{PulgarPhd} (we set $C_{\mathrm{opt}}=16.1985\times10^{-9}~[s^3/rad^3]$ in this paper). 

For each Type-C WTG, the differential equations are given as follows:
\[
\begin{cases}         
\dot{E}_{qD}'=-\frac{1}{T_{0}'}(E_{qD}'+(X_{s}-X_{s}')I_{ds})+\omega_{s}\frac{X_{m}}{X_{r}}V_{dr}\\
~~~~~~~~-(\omega_{s}-\omega_{r})E_{dD}', \\
\dot{E}_{dD}'=-\frac{1}{T_{0}'}(E_{dD}'-(X_{s}-X_{s}')I_{qs})-\omega_{s}\frac{X_{m}}{X_{r}}V_{qr}      \\
~~~~~~~~+(\omega_{s}-\omega_{r})E_{qD}', \\
~~d\omega_{r}=\frac{\omega_{s}}{2H_{D}}(T_{m}-E_{dD}'I_{ds}-E_{qD}'I_{qs})dt+k_wdw,\\
~~\dot{x}_{1}=K_{I1}(P_{\mathrm{r}\mathrm{e}\mathrm{f}}-P_{\mathrm{g}\mathrm{e}\mathrm{n}}),                \\
~~\dot{x}_{2}=K_{I2}(K_{P1}(P_{\mathrm{r}\mathrm{e}\mathrm{f}}-P_{\mathrm{g}\mathrm{e}\mathrm{n}})+x_{1}-I_{qr}),\\
~~\dot{x}_{3}=K_{I3}(Q_{\mathrm{r}\mathrm{e}\mathrm{f}}-Q_{\mathrm{g}\mathrm{e}\mathrm{n}}),\\
~~\dot{x}_{4}=K_{I4}(K_{P3}(Q_{\mathrm{r}\mathrm{e}\mathrm{f}}-Q_{\mathrm{g}\mathrm{e}\mathrm{n}})+x_{3}-I_{dr}),
\end{cases}                 
\]
where $E_{dD}'$, $E_{qD}'$ and $\omega_{r}$ are the $d$, $q$ axis voltage and rotor speed of the WTG, respectively, $k_w$ is a positive factor corresponding to the stochastic part of the wind power generation, $x_{1}$ to $x_{4}$ are proportional-integral (PI) regulator induced states, $K_{I1}$, $K_{I2}$, $K_{I3}$, $K_{I4}$, $K_{P1}$, $K_{P2}$, $K_{P3}$, $K_{P4}$ are parameters of the PI regulator, $T_{m}$ is the mechanical torque generated by the wind, $V_{dr}, V_{qr}, I_{dr}, I_{qr}$ are the rotor $d$, $q$ axis voltage and current, respectively, $I_{ds}, I_{qs}$ are the stator $d$, $q$ axis current, respectively, $P_{\mathrm{g}\mathrm{e}\mathrm{n}}$ and $Q_{\mathrm{g}\mathrm{e}\mathrm{n}}$ are the WTG active and reactive power output, respectively, and

\begin{figure}[ht]
	\centering
	\includegraphics[scale=0.3]{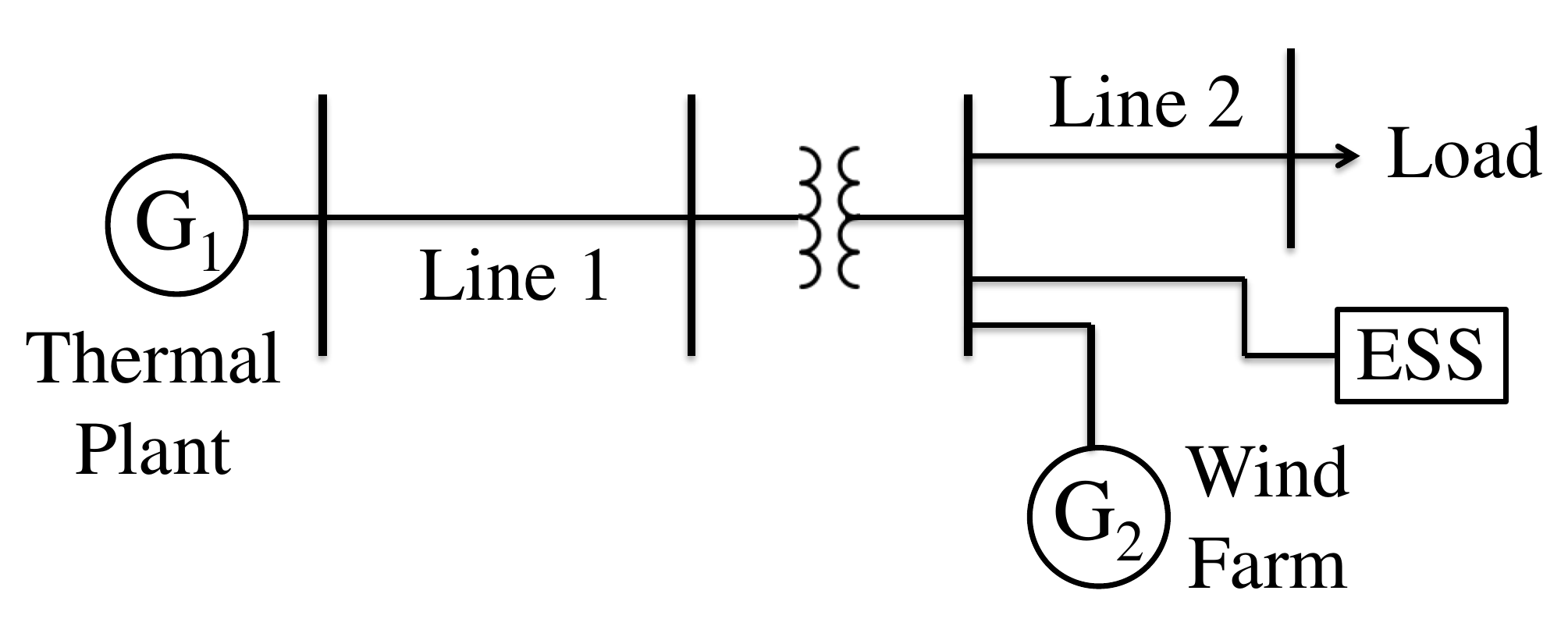}
	\caption{The four-bus system \cite{ZhangPulgar2017} with a thermal plant, a wind farm and an energy storage system (ESS).} 
	\label{dfig}
\end{figure}
\begin{align}\nonumber
	\begin{split}
	& P_{\mathrm{r}\mathrm{e}\mathrm{f}}=C_{\mathrm{o}\mathrm{p}\mathrm{t}}\omega_{r}^{3}+u^w, Q_{\mathrm{r}\mathrm{e}\mathrm{f}}=Q_{\mathrm{s}\mathrm{e}\mathrm{t}},\\
	& T_{0}'=\frac{X_{r}}{\omega_{s}R_{r}},~X_{s}'=X_{s}-\frac{X_{m}^{2}}{X_{r}}, \lambda=\frac{2k\omega_{r}R_{t}}{pv_{\mathrm{w}\mathrm{i}\mathrm{n}\mathrm{d}}}, \\
	&\lambda_{i}=(\frac{1}{\lambda+0.08\theta_{t}}-\frac{0.035}{\theta_{t}^{3}+1})^{-1}, \\  
    & C_{p}=0.22(\displaystyle \frac{116}{\lambda_{i}}-0.4\theta_{t}-5)e^{-\frac{12.5}{\lambda_{i}}}, \\		
	& T_{m}=\displaystyle \frac{1}{2}\frac{\rho\pi R_{t}^{2}\omega_{b}C_{p}v_{\mathrm{w}\mathrm{i}\mathrm{n}\mathrm{d}}^{3}}{S_{b}\omega_{r}},
	\end{split}  
\end{align}
where $u^w$ is a control input (by adjusting the input $u^w$, the wind turbine generator can adjust its power output for restoring the grid frequency to allowable ranges after a disturbance), the explanations of the other parameters can be found in Section 2.1.2 of \cite{PulgarPhd}.
We set $S_b$=1MVA, $P_{\mathrm{g}\mathrm{e}\mathrm{n}}=1.5$, $v_{\mathrm{w}\mathrm{i}\mathrm{n}\mathrm{d}}$=12m/s for the operating condition.

The algebraic equations of each Type-C WTG are given as follows:
\[
\begin{cases}
0=K_{P2}(K_{P1}(P_{\mathrm{r}\mathrm{e}\mathrm{f}}-P_{\mathrm{g}\mathrm{e}\mathrm{n}})+x_{1}-I_{qr})+x_{2}-V_{qr},                   \\
0=K_{P4}(K_{P3}(Q_{\mathrm{r}\mathrm{e}\mathrm{f}}-Q_{\mathrm{g}\mathrm{e}\mathrm{n}})+x_{3}-I_{dr})+x_{4}-V_{dr},\\
0=-P_{\mathrm{g}\mathrm{e}\mathrm{n}}+E_{dD}'I_{ds}+E_{qD}'I_{qs}-R_{s}(I_{ds}^{2}+I_{qs}^{2})\\
~~~~-(V_{qr}I_{qr}+V_{dr}I_{dr}),\\
0=-Q_{\mathrm{g}\mathrm{e}\mathrm{n}}+E_{qD}'I_{ds}-E_{dD}'I_{qs}-X_{s}'(I_{ds}^{2}+I_{qs}^{2}),\\
0=-I_{dr}+\displaystyle \frac{E_{qD}'}{X_{m}}+\frac{X_{m}}{X_{r}}I_{ds},\\
0=-I_{qr}-\displaystyle \frac{E_{dD}'}{X_{m}}+\frac{X_{m}}{X_{r}}I_{qs}.
\end{cases}                 
\]

The network algebraic equations are given as follows (details of the Type-C wind turbine generator network can be found in Figure 2.3 of \cite{PulgarPhd}):
\begin{center}
	$E_{qD}'-jE_{dD}'=(R_{s}+jX_{s}')(I_{qs}-jI_{ds})+V_{D}$,  	
	$V_{D}e^{j\theta_{D}}=jX_{t}(I_{qs}-jI_{ds}-I_{GC})e^{j\theta_{D}}+Ve^{j\theta}$,   
\end{center}
where $V_{D}$ and $\theta_{D}$ are voltage magnitude and angle of the bus to which the WTG is connected, and
$$
I_{GC}=\frac{V_{qr}I_{qr}+V_{dr}I_{dr}}{V_{D}}.
$$

By linearizing the system of differential-algebraic equations at the equilibrium point (the equilibrium point can be found by calculating the root of the algebraic equations and the right-hand side of the differential equations equal to zero), we have

	\begin{align}
	\begin{split}                                            
	& d\begin{bmatrix}
	\Delta x \\
       0\\		
	\end{bmatrix}=
	\begin{bmatrix}
	A_s& B_s \\
	C_s& D_s \\		
	\end{bmatrix}
	\begin{bmatrix}
	\Delta x \\
    \Delta y\\		
	\end{bmatrix}dt+
	\begin{bmatrix}
	 M_s \\
	 N_s\\		
	\end{bmatrix}u^wdt+\begin{bmatrix}
	\Sigma_{s1} \\
	\Sigma_{s2}\\		
	\end{bmatrix}dw,\\
	&\Delta P_{\rm{gen}}=
	\begin{bmatrix}
	E_s ~~
	F_s\\		
	\end{bmatrix}
	\begin{bmatrix}
	\Delta x \\
	\Delta y \\		
	\end{bmatrix},
	\end{split}
	\end{align}
where $\Delta x=[\Delta E_{qD}, \Delta E_{dD}, \Delta\omega_r, \Delta x_1, \Delta x_2, \Delta x_3, \Delta x_4]^T$, $\Delta y=[\Delta P_{\rm{gen}}, \Delta Q_{\rm{gen}}, \Delta V_{dr}, \Delta V_{qr}, \Delta I_{dr}, \Delta I_{qr}, \Delta I_{ds}, \Delta I_{qs},$ $\Delta V_D, \Delta \theta_D]^T$, $\triangle P_{\mathrm{gen}}$ is the active power variation from each WTG.	

Through the Kron reduction, we have
\begin{align}
\begin{split}
& d\Delta x=A_{\rm{kr}}\Delta xdt+B_{\rm{kr}}u^wdt+\Sigma_{\rm{kr}}dw,                   \\
&\Delta P_{\rm{gen}}=C_{\rm{kr}}\Delta x+D_{\rm{kr}}u^w+E_{\rm{kr}}dw/dt,
\end{split}
\label{Kron}
\end{align}                 
where 
\begin{center}
~~~~$A_{\rm{kr}}=A_s-B_sD_s^{-1}C_s$, ~~~
$B_{\rm{kr}}=M_s-B_sD_s^{-1}N_s$, ~~~
~~~~$C_{\rm{kr}}=E_s-F_sD_s^{-1}C_s$, ~~~
$D_{\rm{kr}}=-F_sD_s^{-1}N_s$,
~~~~$\Sigma_{\rm{kr}}=\Sigma_{s1}-B_sD_s^{-1}\Sigma_{s2}$, ~~
$E_{\rm{kr}}=-F_sD_s^{-1}\Sigma_{s2}$.
\end{center}

From 0 second to 5 seconds after the disturbance, the system frequency response model of the the four-bus system is as follows (we choose base MVA as 1000MVA):
\begin{align}
\begin{cases}
&\triangle\dot{\omega}=\frac{\omega_{s}}{2H}(\triangle P_{m}+u^s+\triangle P_s-\triangle P_d+200\Delta P_{\rm{gen}}/1000\\&~~~~~~~-\frac{D}{\omega_{s}}\triangle\omega),                    \\
&\triangle\dot{P}_s=0;\\
&\triangle\dot{P}_{m}=\frac{1}{\tau_{\mathrm{c}\mathrm{h}}}(\triangle P_{v}-\triangle P_{m}),\\
&\triangle\dot{P}_{v}=\frac{1}{\tau_{g}}(-\triangle P_{v}-\frac{1}{2\pi R}\triangle\omega),
\end{cases}
\label{SG1}
\end{align}                
where $\triangle\omega$ is the grid frequency deviation, $\triangle P_{m}$ is the governor mechanical power variation, $\triangle P_{v}$ is the governor valve position variation and $\triangle P_{d}$ denotes a large disturbance (e.g. generation loss or abrupt load changes). $\triangle P_{\mathrm{gen}}$ times 200 as there are 200 WTGs, and it is divided by 1000 as the base MVA for each WTG and the power system are 1 MVA and 1000 MVA, respectively. We set $\omega_{s}=2\pi\times60$rad/s, $D$=1, $H$=4s, $\tau_{\mathrm{ch}}$=0.3s, $\tau_{g}$=0.1s, $R$=0.05.  

From 5 seconds to 8.75 seconds after the disturbance, the system frequency response model is as follows:

\begin{align}
\begin{cases}
&\triangle\dot{\omega}=\frac{\omega_{s}}{2H}(\triangle P_{m}+u^s+\triangle P_s-\triangle P_d+200\Delta P_{\rm{gen}}/1000\\&~~~~~~~-\frac{D}{\omega_{s}}\triangle\omega),                    \\
&\triangle\dot{P}_s=0.04;\\
&\triangle\dot{P}_{m}=\frac{1}{\tau_{\mathrm{c}\mathrm{h}}}(\triangle P_{v}-\triangle P_{m}),\\
&\triangle\dot{P}_{v}=\frac{1}{\tau_{g}}(-\triangle P_{v}-\frac{1}{2\pi R}\triangle\omega),
\end{cases}
\label{SG2}
\end{align}     

At 8.75 seconds, the generation and load are balanced again. So from 8.75 seconds to 10 seconds after the disturbance, the system frequency response model is the same as that in (\ref{SG1}).

With (\ref{Kron}), (\ref{SG1}) and (\ref{SG2}), we have the following switched stochastic control system with two modes corresponding to (\ref{SG1}) and (\ref{SG2}) respectively:        
\begin{align}
d\hat{x}=(\hat{A}_q\hat{x}+\hat{B}_qu)dt+\hat{\Sigma}_qdw,
\label{whole}
\end{align}  
where $\hat{x}=[\triangle E_{qD}, \triangle E_{dD}, \triangle\omega_r, \triangle x_1, \triangle x_2, \triangle x_3, \triangle x_4,\triangle\omega,$ $\triangle P_s, \triangle P_{m},\triangle P_{v}]^T$, the input $u=[u^w, u^s]^T$. As the matrix $\hat{A}_q$ is computed as Hurwitz for both modes, the system in each mode is stable.                                                   

We consider a disturbance of generation loss of 150 MW (loss of
one unit, $\Delta P_{d}=0.15$). We use the following MTL specification for frequency regulation after the disturbance:	  
\begin{align}
\begin{split}
\varphi =& \Box_{[0,T_{\textrm{end}}]}p_1 \wedge \Box_{[2,T_{\textrm{end}}]}p_2,\\
p_1 =& (-0.5{\rm Hz}\leq \Delta f\leq 0.5{\rm Hz})\wedge(-10{\rm Hz}\leq \Delta f_{r}\leq 10{\rm Hz}),\\
p_2=& (-0.4{\rm Hz}\leq \Delta f\leq 0.4{\rm Hz}),
\end{split}
\label{spec}
\end{align}	
where $\Delta f=\frac{\triangle\omega}{2\pi}$, $\Delta f_{\rm{r}}=\frac{\triangle\omega_{\rm{r}}}{2\pi}$. The specification means ``After a disturbance, the grid frequency deviation should never exceed $\pm$0.5Hz, the WTG rotor speed deviation should never exceed $\pm$10Hz, after 2 seconds the grid frequency deviation should always be within $\pm$0.4Hz''.  
	
\begin{figure}
	\centering
	\includegraphics[scale=0.3]{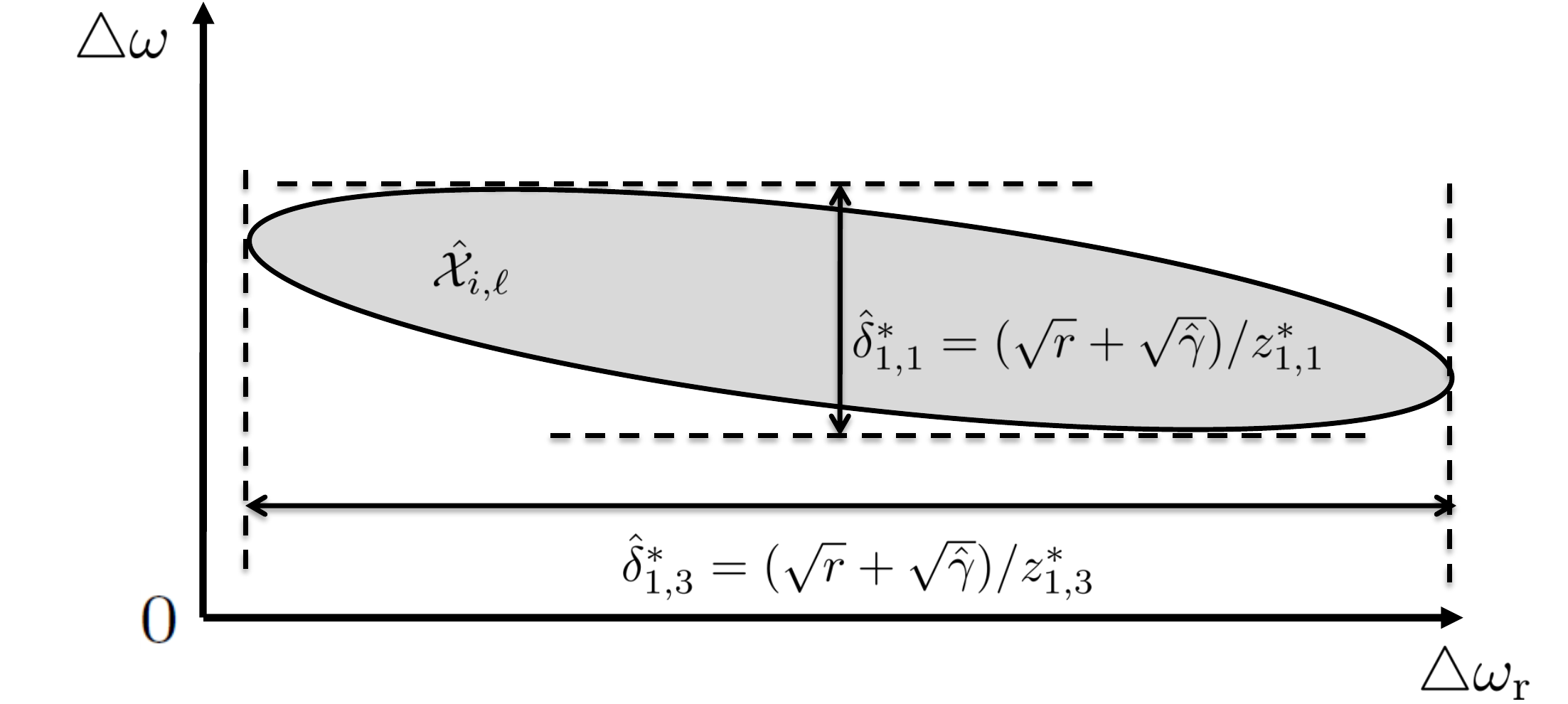}
	\caption{Outer bounds of states $\triangle\omega_{\rm{r}}, \triangle\omega$ in the stochastic robust neighbourhood $\hat{\mathcal{X}}_{i,\ell}$.} \label{ellipse}
\end{figure}

	\begin{table}[]
		\label{{parameter}}  
		\centering
		\caption{System parameters}
		\label{my-label}
		\begin{tabular}{lll}
			\toprule[2pt] 
			VA base $P_\textrm{b}$                 &       & 1000MVA \\ \hline
			System frequency $f_\textrm{s}$            &       & 60Hz   \\ \hline                        
			Active power flow to load $\textrm{L}_{i}$         & $i$=1       & 0.4 (pu)   \\ \hline
										& $i$=2       & 0.1 (pu)   \\ \hline
										& $i$=3       & 0.05 (pu)    \\ \hline
										& $i$=4       & 0.05 (pu)    \\ \hline
			Transformer impedance                    & $\textrm{G}_1$          & 1.8868 (pu)     \\ \hline                                   & $\textrm{G}_2$          & 0.618 (pu)    \\ \bottomrule[2pt] 
		\end{tabular}
	\end{table}     
	
	\begin{table}[]
		\label{{parameter}}  
		\centering
		\caption{Line data (1000 MVA base).}
		\label{my-label}
		\begin{tabular}{lll}
			\toprule[2pt]    
			Line number    & Line impedance (pu)   & Line charging (pu) \\ \hline
			2-8(2-9)          & j0.01 & 0.0006625\\ \hline
			7-8(7-9)          & j0.04 & 0.0023 \\ \hline
			4-8(4-9)          & j0.03 & 0.0031 \\ \hline
			4-5               & j0.03 & 0.0034  \\ \hline
			5-6               & j0.03 & 0.0094 \\ \hline
			6-7               & j0.02 & 0.0258 \\ \bottomrule[2pt]      
		\end{tabular}
	\end{table}

We set $k_w=1$, $T_{\textrm{end}}=5$ (s), $\epsilon=\alpha T_{\textrm{end}}/\hat{\gamma}=5\%$, so $\alpha=0.05\hat{\gamma}/T_{\textrm{end}}=0.01\hat{\gamma}$. As $\alpha=trace(\hat{\Sigma}^{T}M\hat{\Sigma})=k_w^2M(3,3)$, we have $\hat{\gamma}=100k_w^2M(3,3)=100M(3,3)$. We assume that the initial state variations can be covered by $B_{q^0}(\hat{x}^{\ast}_0,r)$, where $r=4\hat{\gamma}$ ($4=2^2$ is chosen as the initial state variations due to the time needed for running the algorithm to generate the controller, which is about twice the simulation time), $\hat{x}^{\ast}_0$ is zero in every dimension. It can be seen from (\ref{spec}) that the allowable variation range of the grid frequency variation $\triangle\omega$ is much smaller than that of the wind turbine rotor speed variation $\triangle\omega_{\rm{r}}$. Therefore, in order to decrease the conservativeness of the probabilistic bound as much as possible, we further optimize both $z_{k,i}$ and the matrix $M_{q^i}$ (geometrically change the shape of the stochastic robust neighbourhoods, see Fig. \ref{ellipse}) such that the outer bounds of the stochastic robust neighbourhoods in the dimension of the grid frequency variation $\hat{\delta}^{i\ast}_{1,1}$ ($\hat{\delta}^{i\ast}_{1,1}= \hat{\delta}^{i\ast}_{1,2}=\hat{\delta}^{i\ast}_{2,1}=\hat{\delta}^{i\ast}_{2,2}$) are much smaller than the outer bounds in the dimension of the wind turbine rotor speed variation $\hat{\delta}^{i\ast}_{1,3}$ ($\hat{\delta}^{i\ast}_{1,3}= \hat{\delta}^{i\ast}_{1,4}$). As $\hat{\delta}_{k,i}=(\sqrt{r_{q^i}}+\sqrt{\hat{\gamma}})/z_{k,i}$ and $\hat{\gamma}=100M_{q^0}(3,3)$, minimizing $\hat{\delta}_{1,1}$ can be achieved by minimizing $M_{q^i}(3,3)~(i=1,2,3)$ and maximizing $z^i_{1,1}$. The combined optimization to obtain both $M^{\ast}$ and $z^{\ast}_{1,1}$ is as follows:

\begin{align}
\begin{split}
&\textrm{min}. -(z^i_{1,1})^2\\	  
\textrm{s.t.} ~&  M_{q^i}\succ 0,
\hat{A}_{q^i}^{T}M_{q^i}+M_{q^i}\hat{A}_{q^i}+\mu M_{q^i}\preceq 0,\\
& e_3^{T}M_{q^i}e_3\leq \zeta, M_{q^i}-(z^i_{1,1})^2a_{1,1}a_{1,1}^{T}\succeq 0.
\end{split}
\label{opt1}
\end{align}
where $e_3=[0,0,1,0,0,0,0,0,0,0]^T$, $\mu=0.1$, $\zeta$ is tuned manually to be as small as possible while the optimization problem is feasible.

With the $M_{q^i}^{\ast}$ obtained from (\ref{opt1}), we compute the tightest outer bound in the dimension of $\triangle\omega_{\rm{r}}$ as follows:

\begin{align}
\begin{split}
&\textrm{min}. -(z^i_{1,3})^2\\	  
 \textrm{s.t.} ~  
& M_{q^i}^{\ast}-(z^i_{1,3})^2a_{1,3}a_{1,3}^{T}\succeq 0.
\end{split}
\label{opt2}
\end{align}

From (\ref{opt1}) and (\ref{opt2}), we obtain the $\hat{\delta}^{\ast}_{k,i}$-robust modified formula:
\[
\begin{split}
\varphi_{\hat{\delta}^{\ast}} =& \Box_{[0,T_{\textrm{end}}]}\hat{p}^{\ast}_1 \wedge \Box_{[2,T_{\textrm{end}}]}\hat{p}^{\ast}_2,\\
\hat{p}^{\ast}_1 =& (-0.5{\rm Hz}+0.217 e^{-0.01t}{\rm Hz}\leq \Delta f\\&\leq 0.5{\rm Hz}-0.217 e^{-0.01t}{\rm Hz})\wedge\\
&(-10{\rm Hz}+6.08 e^{-0.01t}{\rm Hz}\leq \Delta f_{r}\\&\leq 10{\rm Hz}-6.08 e^{-0.01t}{\rm Hz}),\\
\hat{p}^{\ast}_2=& (-0.4{\rm Hz}+0.217 e^{-0.01t}{\rm Hz}\leq \Delta f\\&\leq 0.4{\rm Hz}-0.217 e^{-0.01t}{\rm Hz}).
\end{split}
\]

\begin{figure}[th]
	\centering
	\includegraphics[width=8cm]{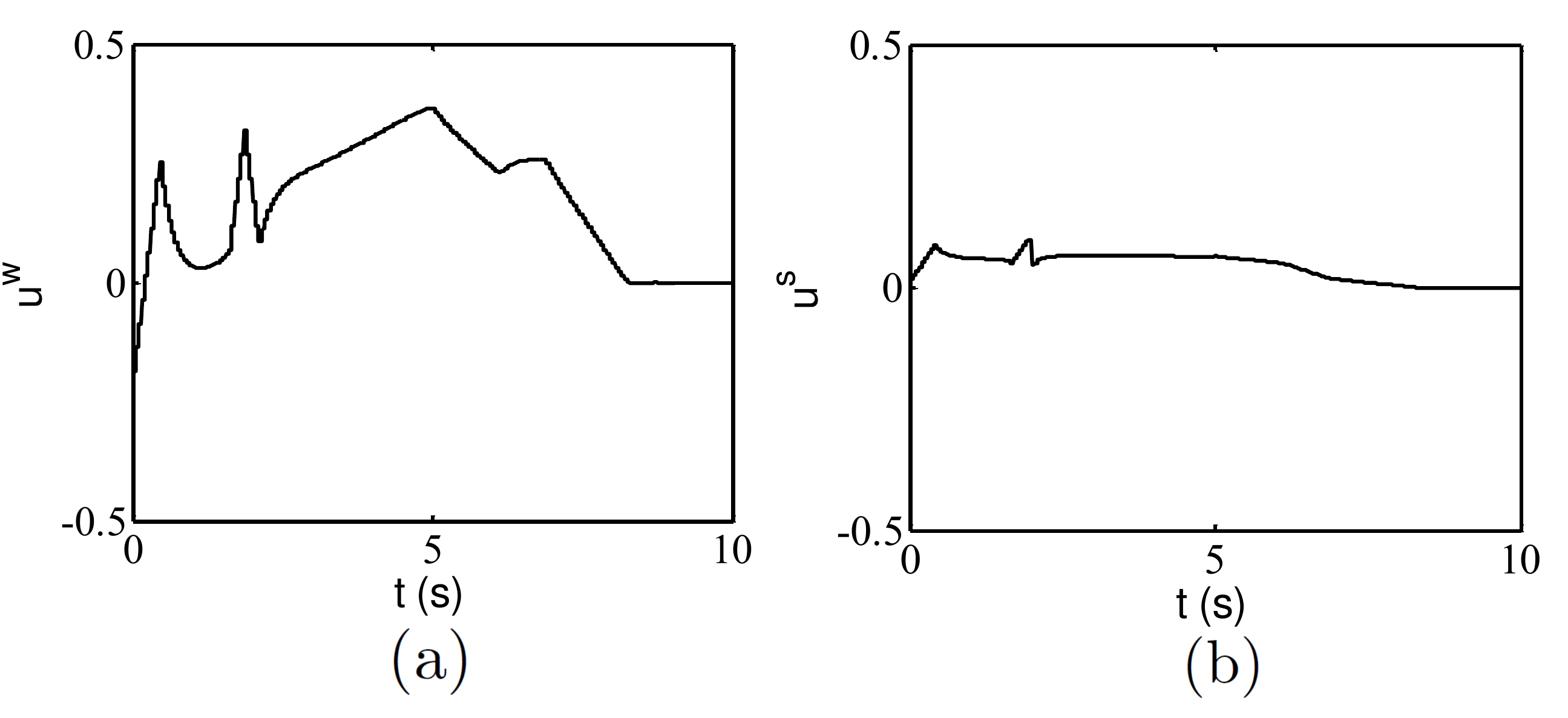}\caption{The obtained optimal input signals in case study I.}
	\label{wind_u}
\end{figure}          

\begin{figure}[th]
	\centering
	\includegraphics[width=8cm]{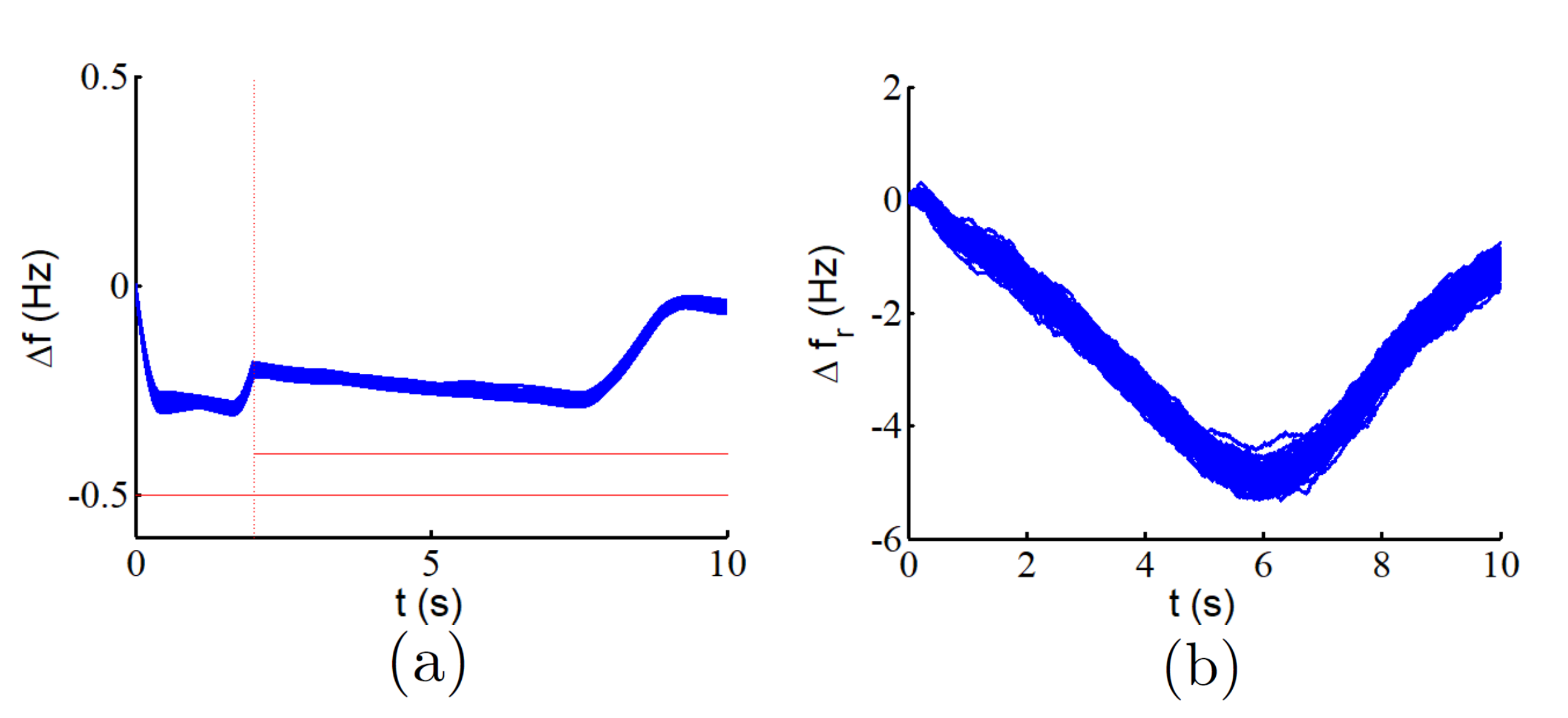}\caption{100 trajectories (realizations) of $\Delta f$ and $\Delta f_r$ without control (black) and with the feedforward controller (blue) in case study I.}
	\label{wind_f}
\end{figure}

We further design a feedback controller based on the obtained optimal input signals of the feedforward controller. To make a comparison between the feedback controller and the feedforward controller, we add an unexpected disturbance of per unit value 0.38 to $\Delta P_{d}$ during the first 0.1 second while generating 100 trajectories (realizations) of the stochastic system with both the feedforward and the feedback controller. As shown in Fig.~\ref{test_d}, 81\% of the trajectories generated with the feedforward controller still satisfy the MTL specification $\varphi$ with the minimal robustness degree of -0.0131, while all the trajectories generated with the feedback controller still satisfy the MTL specification $\varphi$ with the minimal robustness degree of 0.0235.
 
\begin{figure}[th]
	\centering
	\includegraphics[width=8cm]{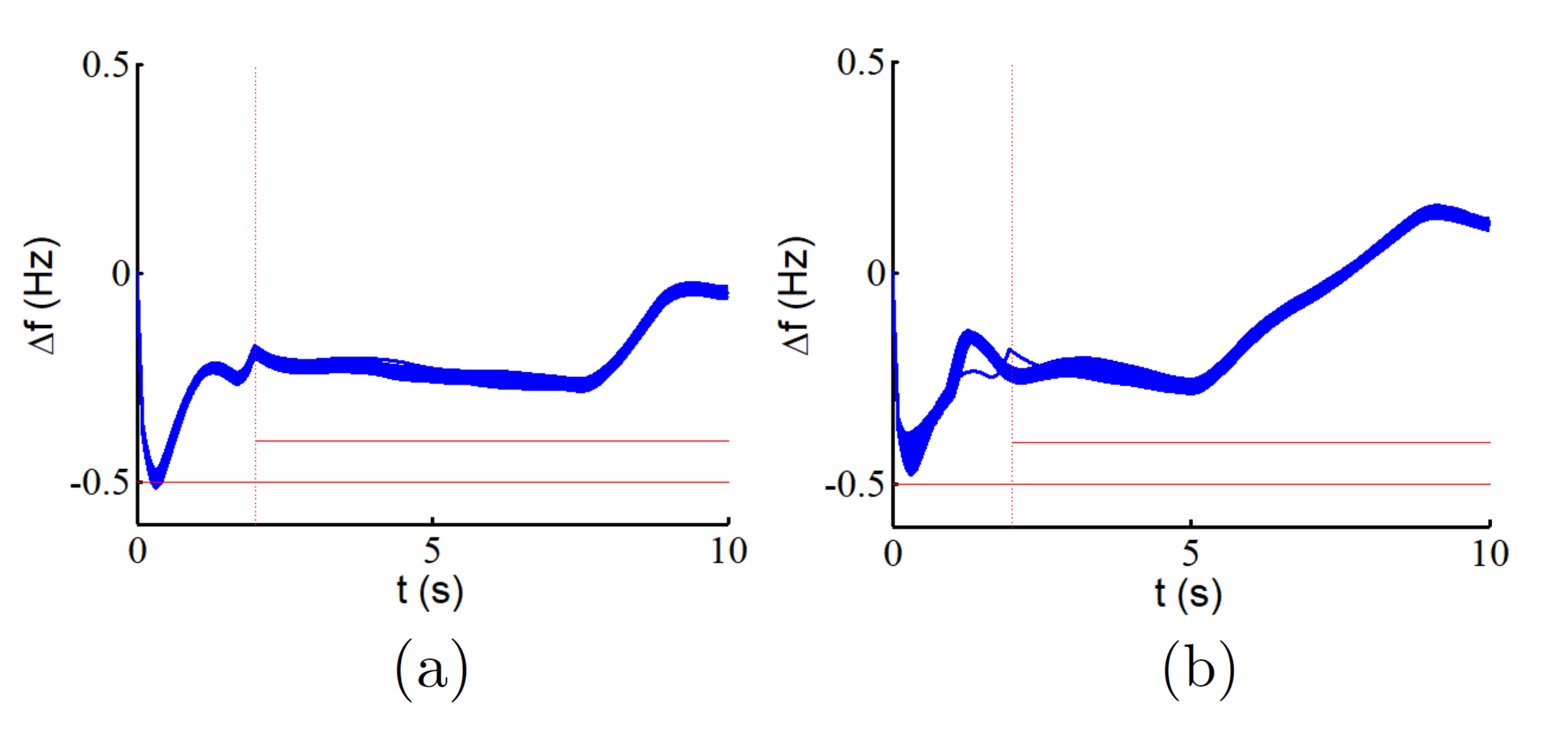}\caption{100 trajectories (realizations) of $\Delta f$ (a) with the feedforward controller and (b) with the feedback controller, with an unexpected disturbance of per unit value 0.38 to $\Delta P_{d}$ during the first 0.1 second.}
	\label{test_d}
\end{figure}

\subsection{Case Study II}
In this section, we apply the controller synthesis method on a nine-bus system as shown in Fig. \ref{gridnew}. The thermal plant $\textrm{G}_{1}$ and the wind farm $\textrm{G}_{2}$ are the same as those in Case Study I, with two energy storage systems (ESS) placed near them respectively. Four constant power loads are denoted as $\textrm{L}_1$, $\textrm{L}_2$, $\textrm{L}_3$ and $\textrm{L}_4$. The line data can be found in Tab. I and Tab. II \cite{zhe_control}.

\begin{figure}
	\centering
	\includegraphics[scale=0.3]{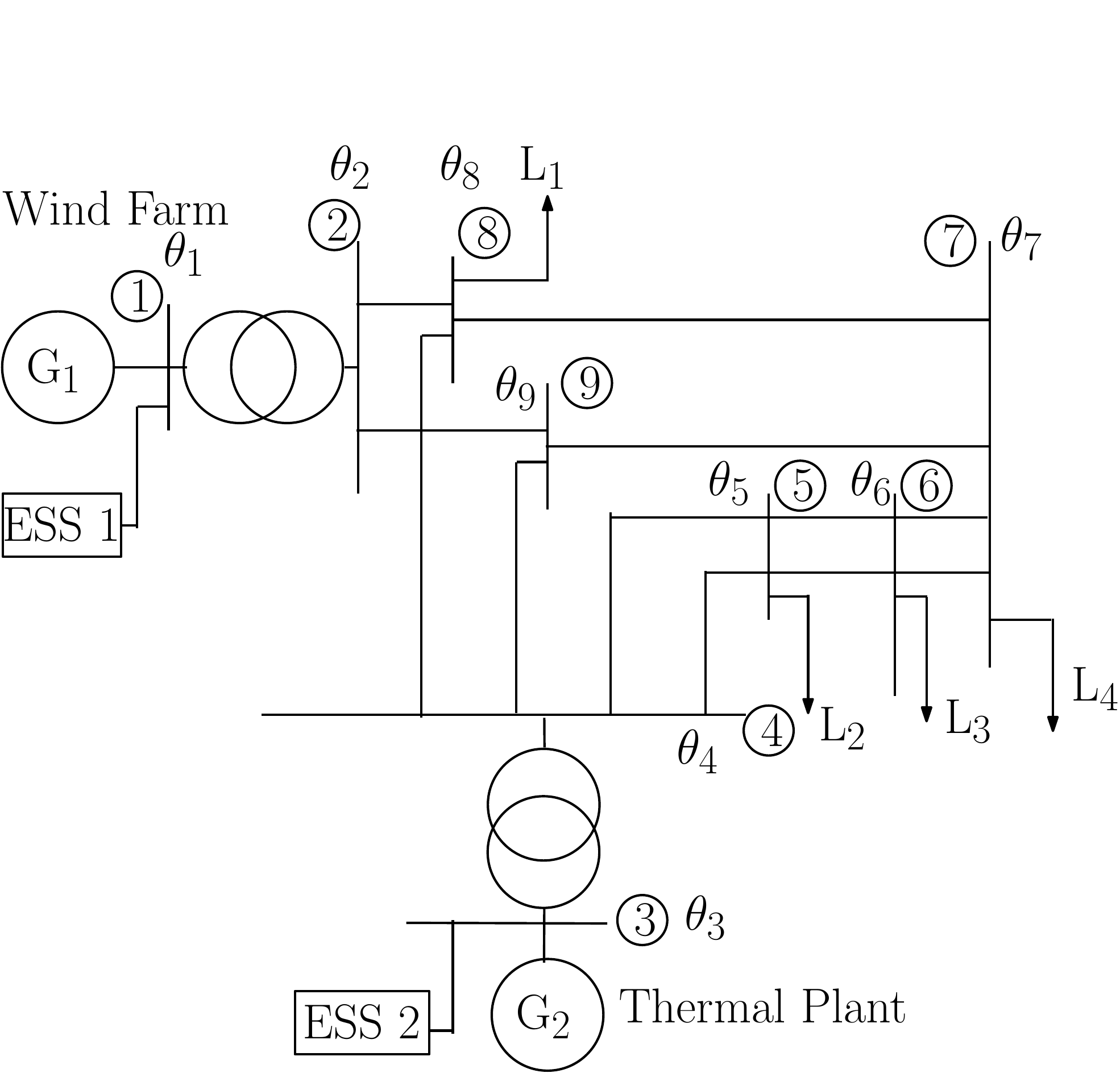}
	\caption{The nine-bus system with a thermal plant, a wind farm and two energy storage systems (ESS).} 
	\label{gridnew}
\end{figure}

We use the following MTL specification for frequency regulation after the disturbance:	  
\begin{align}
\begin{split}
\varphi =& \Box_{[0,T_{\textrm{end}}]}p_1 \wedge \Box_{[2,T_{\textrm{end}}]}p_2\wedge\Box_{[2,T_{\textrm{end}}]}p_3,\\
p_1 =& (-0.5{\rm Hz}\leq \Delta f\leq 0.5{\rm Hz})\wedge(-10{\rm Hz}\leq \Delta f_{r}\leq 10{\rm Hz}),\\
p_2=& (-0.4{\rm Hz}\leq \Delta f\leq 0.4{\rm Hz}),\\
p_3=&\bigwedge_{ij\in\mathcal{E}}(-0.25\leq P_{ij}\leq 0.25).
\end{split}
\label{spec2}
\end{align}

The first two subformulas in (\ref{spec2}) are the same as in (\ref{spec}) used in Section, while the third subformula $\Box_{[2,T_{\textrm{end}}]}p_3$ specifies the real power constraints in each line. We obtain the following $\hat{\delta}^{\ast}_{k,i}$-robust modified formula:
\[
\begin{split}
\varphi_{\hat{\delta}^{\ast}} =& \Box_{[0,T_{\textrm{end}}]}\hat{p}^{\ast}_1 \wedge \Box_{[2,T_{\textrm{end}}]}\hat{p}^{\ast}_2 \wedge\Box_{[2,T_{\textrm{end}}]}\hat{p}^{\ast}_3,\\
\hat{p}^{\ast}_1 =& (-0.5{\rm Hz}+0.217 e^{-0.01t}{\rm Hz}\leq \Delta f\\&\leq 0.5{\rm Hz}-0.217 e^{-0.01t}{\rm Hz})\wedge\\
&(-10{\rm Hz}+6.08 e^{-0.01t}{\rm Hz}\leq \Delta f_{r}\\&\leq 10{\rm Hz}-6.08 e^{-0.01t}{\rm Hz}),\\
\hat{p}^{\ast}_2=& (-0.4{\rm Hz}+0.217 e^{-0.01t}{\rm Hz}\leq \Delta f\\&\leq 0.4{\rm Hz}-0.217 e^{-0.01t}{\rm Hz}),\\
\hat{p}^{\ast}_3=&\bigwedge_{ij\in\mathcal{E}}(-0.25+0.0258 e^{-0.01t}\leq P_{ij}\\&\leq 0.25-0.0258 e^{-0.01t}).
\end{split}
\]
where $\mathcal{E}\subset\mathcal{N}\times\mathcal{N}$ is the set of transmission lines ($\mathcal{N}$ is the set of buses).

As there are 9 different lines corresponding to 9 different inequalities in the MTL specification, solving the optimization problem with all the inequality constraints could be computationally expensive. To reduce computation, we first set an initial MTL specification and iteratively add the line power flow inequality constraints that are violated with the previous optimization. The initial MTL specification $\varphi^0_{\hat{\delta}^{\ast}}$ as follows (by reducing the line power flow constraints in $\varphi_{\hat{\delta}^{\ast}}$):
\[
\begin{split}                                           
\varphi^0_{\hat{\delta}^{\ast}} =& \Box_{[0,T_{\textrm{end}}]}\hat{p}^{\ast}_1 \wedge \Box_{[2,T_{\textrm{end}}]}\hat{p}^{\ast}_2.
\end{split}
\]

We perform the feedforward controller synthesis with respect to $\varphi^0_{\hat{\delta}^{\ast}}$. We set $J(u(\cdot))=\norm{u^w(\cdot)}_2+\lambda\norm{u^s(\cdot)}_2$, where $\lambda=100$ (larger $\lambda$ encourages power input from the wind turbine generator). After the first iteration, the line 2-9 is overloaded and thus does not satisfy the line flow constraint in $\varphi_{\hat{\delta}^{\ast}}$. Thus we add line 2-9 power specification and obtain the following MTL specification $\varphi^1_{\hat{\delta}^{\ast}}$:
\[
\begin{split}
\varphi^1_{\hat{\delta}^{\ast}} =& \Box_{[0,T_{\textrm{end}}]}\hat{p}^{\ast}_1 \wedge \Box_{[2,T_{\textrm{end}}]}\hat{p}^{\ast}_2\wedge\Box_{[2,T_{\textrm{end}}]}\hat{p}^{\ast1}_3,\\
\hat{p}^{\ast1}_3=& (-0.25+0.0258 e^{-0.01t}\leq P_{28}\leq 0.25-0.0258 e^{-0.01t}).
\end{split}
\]
As shown in Fig. \ref{wind_f}, all of the 100 trajectories (realizations) starting from $B_{\psi}(\hat{x}^{\ast}_0,r)$ with the obtained optimal input signals  not only satisfy the MTL specification $\varphi^1_{\hat{\delta}^{\ast}}$, but already satisfy the original MTL specification $\varphi_{\hat{\delta}^{\ast}}$. Thus the iteration stops and the optimal input signals are obtained. The~obtained optimal input signals are shown in Fig. \ref{wind_u}.

\begin{figure}[th]                                          
	\centering
	\includegraphics[width=9cm]{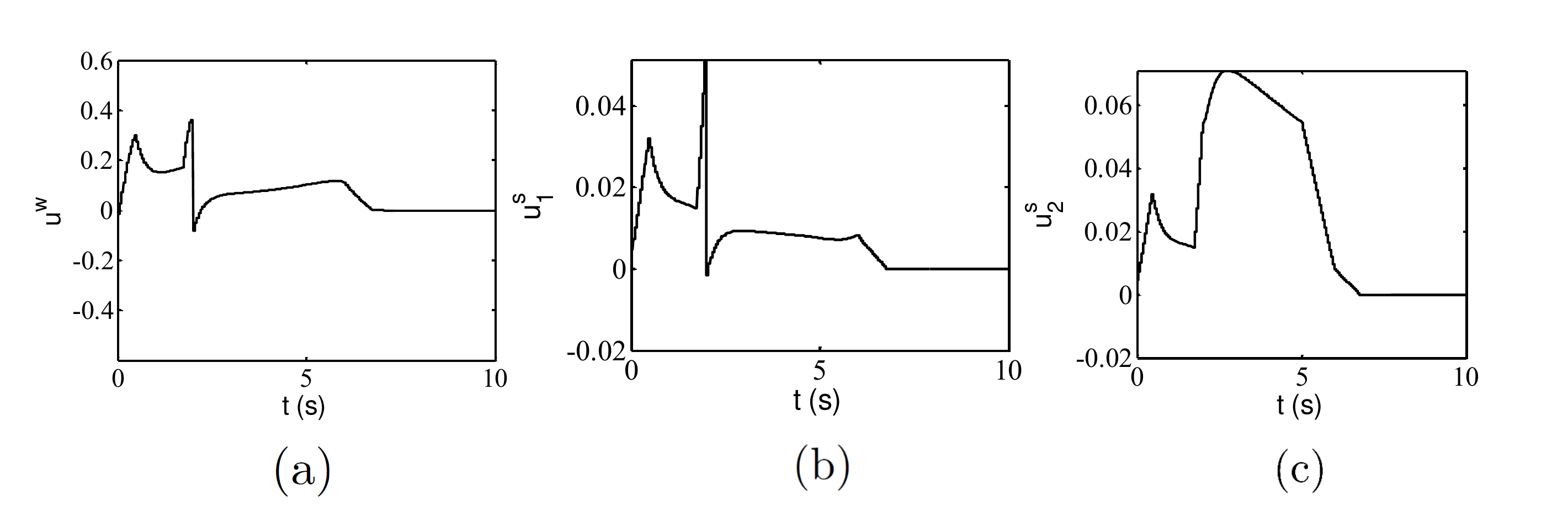}\caption{The obtained optimal input signals in case study II.}
	\label{wind_u}
\end{figure} 

We also design a feedback controller based on the obtained optimal input signals of the feedforward controller and the satisfaction rate is the same as that with the feedforward controller (100\%).  	

\begin{figure}[th]
	\centering
	\includegraphics[width=8cm]{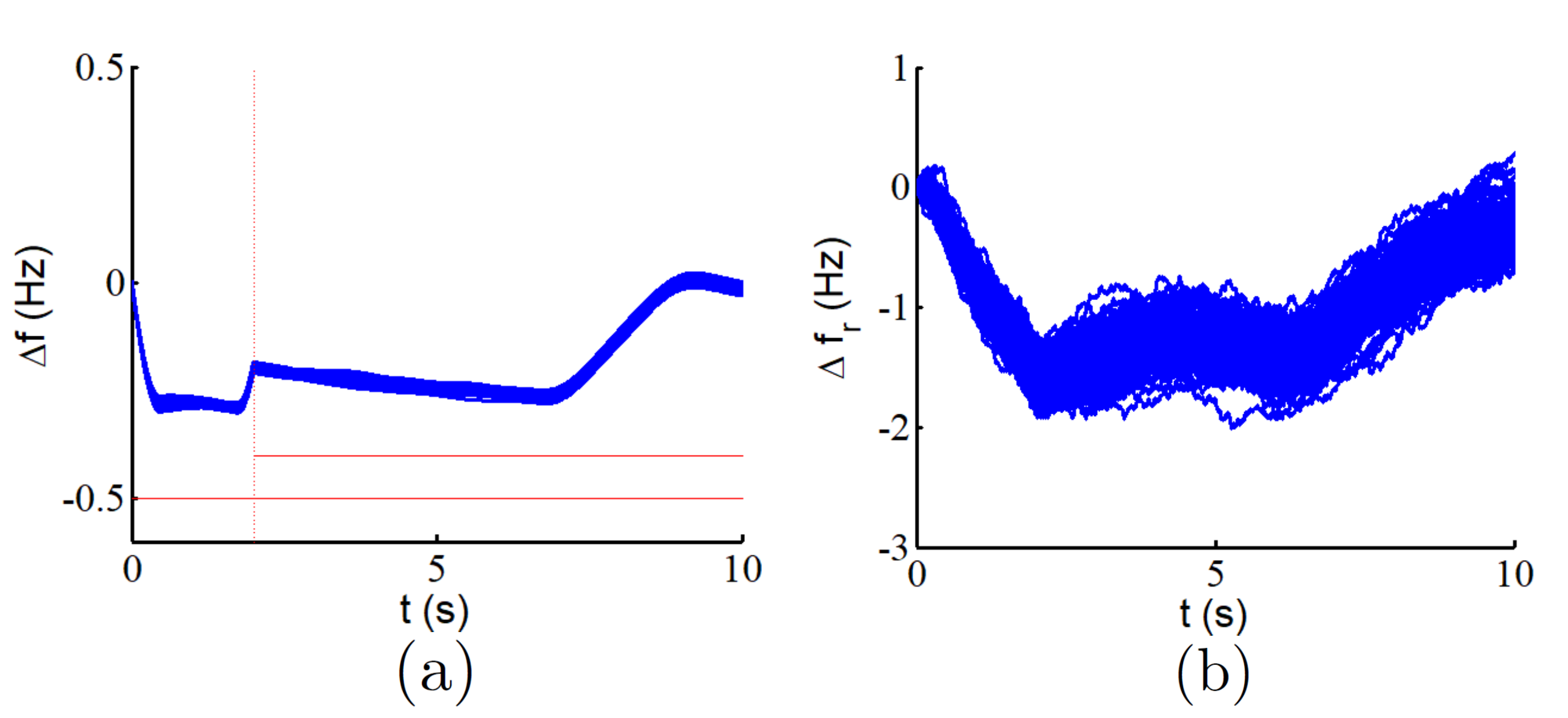}\caption{100 trajectories (realizations) of $\Delta f$ and $\Delta f_r$ with the feedforward controller (blue) in case study II.}
	\label{wind_f}
\end{figure}

\begin{figure}[th]
	\centering
	\includegraphics[width=8.5cm]{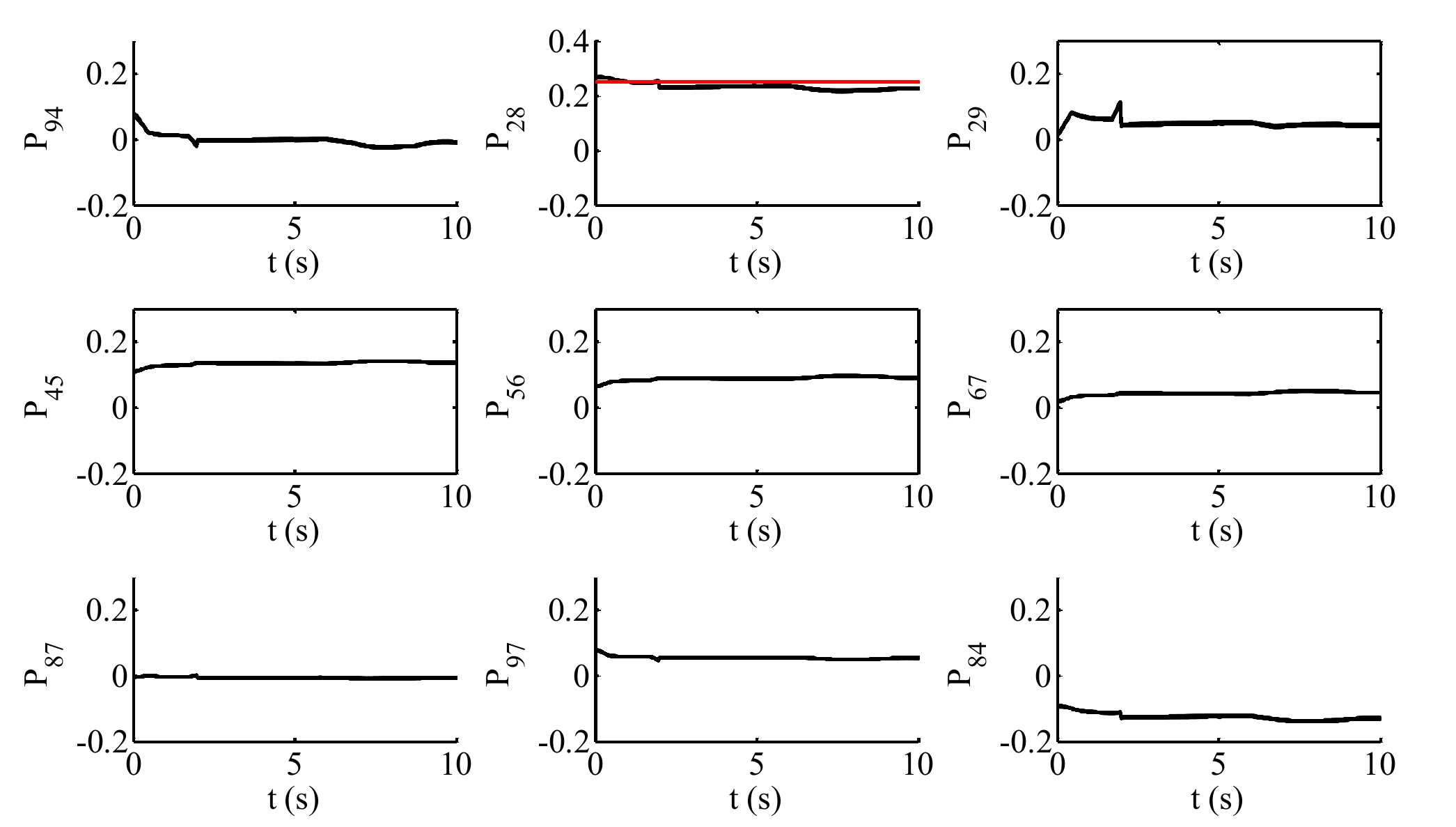}\caption{100 trajectories (realizations) of the real power of 9 different lines with the feedforward controller (blue) in case study II.}
	\label{wind_line}
\end{figure}

\section{CONCLUSIONS}
\label{conclusion}
We presented a coordinated control method of wind turbine generators and storage for frequency regulation under metric temporal logic specifications. Both the feedforward and feedback controllers are designed for this purpose. The controller synthesis approach can be also used in other related areas in power systems such as transient stability enhancement, voltage regulation, etc. Similar approaches can be applied in other switched control systems as well.

\section*{Acknowledgment}
This research was partially supported by the National Science
Foundation through grants CNS-1218109, CNS-1550029 and CNS-1618369 and the University of Tennessee. The authors would like to thank Dr. Yichen Zhang, Prof. Kevin Tomsovic, Prof. Hector Pulgar-Painemal, et al for introducing us to the WTG frequency support problem.

\section*{APPENDIX}
	\textbf{Proof of Theorem \ref{th1}}:\\
For the output trajectory $s_{\rho^{\ast}}({\bm\cdot};x^{\ast}_0,u)$ of trajectory $\rho^{\ast}=\{(q^{i},\xi^{\ast}_{q^{i}}(t;x^{\ast 0}_{q^i},u),T^{i})\}_{i=0}^{N_{q}}$ (where $x^{\ast 0}_{q^0}=x^{\ast}_0$) of the switched nominal control system, if $B_{q^{i-1}}(\xi_{q^{i-1}}(T^{i-1},x^{\ast 0}_{q^{i-1}}),r_{q^{i-1}}e^{-\mu^{i-1} T^{i-1}/2})\subset B_{q^{i}}(x^{\ast 0}_{q^i},$ $r_{q^{i}}) (i=1,2,\dots,N_q)$, then for any $\tilde{x}_0\in B_{q^0}(x^{\ast}_0,r_{q^0})$ and the output trajectory $s_{\tilde{\rho}^{\ast}}({\bm\cdot};\tilde{x}_0,u)$ of trajectory $\tilde{\rho}^{\ast}=\{(q^{i},\xi^{\ast}_{q^{i}}(t;\tilde{x}^{\ast 0}_{q^i},u),T^{i})\}_{i=0}^{N_{q}}$ (where $\tilde{x}^{\ast 0}_{q^0}=\tilde{x}_0$) of the switched nominal control system, we have $\tilde{x}^{\ast0}_{q^i}\in B_{q^i}(x^{\ast 0}_{q^i},r_{q^i})$.

For every $k\in\{1,\dots,\eta\}$, $\nu\in\{1,\dots,n_k\}$ and any $\tilde{x}^{\ast 0}_{q^i}\in B_{q^i}(x^{\ast 0}_{q^i},r_{q^i})$, we have
	\begin{align}
	\begin{split}
	& (\xi_{q^i}^{\ast}(t;\tilde{x}^{\ast 0}_{q^i},u)-\xi_{q^i}^{\ast}(t;x^{\ast 0}_{q^i},u))^{T}a_{k,\nu}a_{k,\nu}^{T}(z_{k,\nu}^{i})^2(\xi_{q^i}^{\ast}(t;\tilde{x}^{\ast 0}_{q^i},u)\\
	&-\xi_{q^i}^{\ast}(t; x^{\ast 0}_{q^i},u))\le (\xi_{q^i}^{\ast}(t;\tilde{x}^{\ast 0}_{q^i},u)-\xi_{q^i}^{\ast}(t;x^{\ast 0}_{q^i},u))^{T}M_{q^i}\\
	&(\xi_{q^i}^{\ast}(t;\tilde{x}^{\ast 0}_{q^i},u)-\xi_{q^i}^{\ast}(t;x^{\ast 0}_{q^i},u))\\&=\psi_{q^i}(\xi_{q^i}^{\ast}(t;\tilde{x}^{\ast 0}_{q^i},u), \xi_{q^i}^{\ast}(t;x^{\ast 0}_{q^i},u))e^{-\mu_{q^i} t}\le r_{q^i}e^{-\mu_{q^i} t}.
	\end{split}  
	\end{align}		
	Therefore, we have $\norm{a_{k,\nu}^{T}(\xi_{q^i}^{\ast}(t;\tilde{x}^{\ast 0}_{q^i},u)-\xi_{q^i}^{\ast}(t;x^{\ast 0}_{q^i},u))}\le\sqrt{r_{q^i}}e^{-\mu_{q^i} t/2}/z^i_{k,\nu}$, thus 
	\begin{align}
	  \begin{split}
	& -\sqrt{r_{q^i}}e^{-\mu_{q^i}t/2}/z^i_{k,\nu}\le a_{k,\nu}^{T}(\xi_{q^i}^{\ast}(t;\tilde{x}^{\ast 0}_{q^i},u)-\xi_{q^i}^{\ast}(t;x^{\ast 0}_{q^i},u))\\&\le\sqrt{r_{q^i}}e^{-\mu_{q^i} t/2}/z^i_{k,\nu}. 
       \end{split} 
	 \label{b1} 
	\end{align}	
	For every $k\in\{1,\dots,\eta\}$, $\nu\in\{1,\dots,n_k\}$ and output trajectory $s_{\tilde{\rho}}({\bm\cdot};\tilde{x}_0,u)$ of trajectory $\tilde{\rho}=\{(q^{i},\xi_{q^{i}}(t;\tilde{x}^{0}_{q^i},u),T^{i})\}_{i=0}^{N_{q}}$ (where $\tilde{x}^{0}_{q^0}=\tilde{x}^{\ast 0}_{q^0}=\tilde{x}_0$) of the switched stochastic control system, if $\sup_{0\leq t\leq T^i}\phi_{q^i}(\xi_{q^i}^{\ast}(t;\tilde{x}^{\ast0}_{q^i},u), \xi_{q^i}(t;\tilde{x}^0_{q^i},u))<\hat{\gamma}$, then $\xi_{q^i}(t;\tilde{x}^0_{q^i},u)\in B_{q^i}(\xi_{q^i}^{\ast}(t;\tilde{x}^{\ast 0}_{q^i},u),\hat{\gamma}e^{-\mu_{q^i} t/2})$, we have
	\begin{align}
	\begin{split}
	& (\xi_{q^i}(t;\tilde{x}^0_{q^i},u)-\xi_{q^i}^{\ast}(t;\tilde{x}^{\ast 0}_{q^i},u))^{T}a_{k,\nu}a_{k,\nu}^{T}(z_{k,\nu}^{i})^2(\xi_{q^i}(t;\tilde{x}^0_{q^i},u)\\
	&-\xi_{q^i}^{\ast}(t;\tilde{x}^{\ast 0}_{q^i},u))\le (\xi_{q^i}(t;\tilde{x}^0_{q^i},u)-\xi_{q^i}^{\ast}(t;\tilde{x}^{\ast 0}_{q^i},u))^{T}M_{q^i}\\
	&(\xi_{q^i}(t;\tilde{x}^0_{q^i},u)-\xi_{q^i}^{\ast}(t;\tilde{x}^{\ast 0}_{q^i},u))\\&=\phi_{q^i}(\xi_{q^i}(t;\tilde{x}^0_{q^i},u), \xi_{q^i}^{\ast}(t;\tilde{x}^{\ast 0}_{q^i},u))e^{-\mu_{q^i} t}\le\hat{\gamma}e^{-\mu_{q^i} t}.
	\end{split}  
	\end{align}
	Therefore, we have $\norm{a_{k,\nu}^{T}(\xi_{q^i}(t;\tilde{x}^0_{q^i},u)-\xi_{q^i}^{\ast}(t;\tilde{x}^{\ast 0}_{q^i},u))}\le\sqrt{\hat{\gamma}}e^{-\mu_{q^i} t/2}/z^i_{k,\nu}$, thus 
	\begin{align}
	\begin{split}
	& -\sqrt{\hat{\gamma}}e^{-\mu_{q^i} t/2}/z_{k,\nu}\le a_{k,\nu}^{T}(\xi_{q^i}(t;\tilde{x}^0_{q^i},u)-\xi_{q^i}^{\ast}(t;\tilde{x}^{\ast 0}_{q^i},u))\\&\le\sqrt{\hat{\gamma}}e^{-\mu_{q^i} t/2}/z^i_{k,\nu}.  
	\label{b2}
	\end{split} 
	\end{align}
	From (\ref{b1}) and (\ref{b2}), we have
	\begin{align}
	\begin{split}
	& -(\sqrt{\hat{\gamma}}+\sqrt{r_{q^i}})e^{-\mu_{q^i} t/2}/z^i_{k,\nu}\le a_{k,\nu}^{T}(\xi_{q^i}(t;\tilde{x}^0_{q^i},u)-\\&\xi_{q^i}^{\ast}(t;x^{\ast 0}_{q^i},u))\le(\sqrt{\hat{\gamma}}+\sqrt{r_{q^i}})e^{-\mu_{q^i} t/2}/z_{k,\nu}. 
	\end{split} 
	\label{b3}
	\end{align}		 	 
	
If $\left[\left[\varphi_{\hat{\delta}}\right]\right](s_{\rho^\ast}({\bm\cdot};x^{\ast}_0,u), 0)\ge 0$, where $\varphi_{\hat{\delta}}$ is the $\hat{\delta}_{k,\nu}$-robust modified formula of $\varphi$, $\hat{\delta}^i_{k,\nu}=(\sqrt{\hat{\gamma}}+\sqrt{r_{q^i}})/z^i_{k,\nu}$, then for every $k\in\{1,\dots,\eta\}$, $\nu\in\{1,\dots,n_k\}$, $i\in\{1,2,\dots,N_q\}$ (resp. $i=0$), and for any $t$ such that $t+\sum\limits_{j=1}^{i-1}T^j\ge\tau_k$ (resp. $t\ge\tau_k$ when $i=0$), we have $a_{k,\nu}^{T}\xi_{q^i}^{\ast}(t;x^{\ast}_0,u)+c_{k,\nu}^{T}u<b_{k,\nu}-(\sqrt{\hat{\gamma}}+\sqrt{r_{q^i}})e^{-\mu_{q^i} t/2}/z^i_{k,\nu}$. In such conditions, for any $\tilde{x}_0\in B_{q^0}(x^{\ast}_0,r_{q^0})$ (thus $\tilde{x}^{\ast0}_{q^i}\in B_{q^i}(x^{\ast 0}_{q^i},r_{q^i})$), if $\xi_{q^i}(t;\tilde{x}^0_{q^i},u)\in B_{q^i}(\xi_{q^i}^{\ast}(t;\tilde{x}^{\ast 0}_{q^i},u),\hat{\gamma}e^{-\mu_{q^i} t/2})$, we have
	\begin{align}
	\begin{split}\nonumber 
	& a_{k,\nu}^{T}\xi_{q^i}(t;\tilde{x}^0_{q^i},u)+c_{k,\nu}^{T}u< a_{k,\nu}^{T}\xi_{q^i}^{\ast}(t;x^{\ast 0}_{q^i},u)+c_{k,\nu}^{T}u+(\sqrt{\hat{\gamma}}\\&+\sqrt{r_{q^i}})e^{-\mu_{q^i} t/2}/z^i_{k,\nu}<b_{k,\nu}.
	\end{split}
	\end{align}     
Therefore, from the above analysis and (\ref{prob}), for any $\tilde{x}_0\in B_{q^0}(x^{\ast}_0,r_{q^0})$ we have ($0\le t\le T^i$ in the following notations)
	\begin{align}\nonumber   
	\begin{split}
	& P\{\left[\left[\varphi\right]\right](s_{\tilde{\rho}}({\bm\cdot};\tilde{x}_0,u), 0)\ge 0~\vert~\left[\left[\varphi_{\hat{\delta}}\right]\right](s_{\rho^\ast}({\bm\cdot};x^{\ast}_0,u), 0)\ge 0\}\\
	& \ge P\{\forall k, \forall \nu, \forall i,\forall t~\textrm{such~that}~t+\sum\limits_{j=1}^{i-1}T^j\ge\tau_k (\textrm{resp.}~ t\ge\tau_k ~\\&\textrm{when}~ i=0), a_{k,\nu}^{T}\xi_{q^i}(t;\tilde{x}^0_{q^i},u)+c_{k,\nu}^{T}u<b_{k,\nu}~\vert~\left[\left[\varphi_{\hat{\delta}}\right]\right]\\&~~ (s_{\rho^\ast}({\bm\cdot};x^{\ast}_0,u),0)\ge 0\}
	\\&\ge P\{\forall k, \forall \nu, \forall i,\forall t, \norm{a_{k,\nu}^{T}(\xi_{q^i}(t;\tilde{x}^0_{q^i},u)-\xi_{q^i}^{\ast}(t;x^{\ast 0}_{q^i},u))}<\\&~~~(\sqrt{\hat{\gamma}}+\sqrt{r_{q^i}})e^{-\mu_{q^i} t/2}/z_{k,\nu}~\vert\left[\left[\varphi_{\hat{\delta}}\right]\right](s_{\rho^\ast}({\bm\cdot};x^{\ast}_0,u), 0)\ge 0\}
	\\&= P\{\forall k, \forall \nu,\forall i,\forall t,\norm{a_{k,\nu}^{T}(\xi_{q^i}(t;\tilde{x}^0_{q^i},u)-\xi_{q^i}^{\ast}(t;x^{\ast 0}_{q^i},u))}<\\&~~~(\sqrt{\hat{\gamma}}+\sqrt{r_{q^i}})e^{-\mu_{q^i} t/2}/z_{k,\nu}\} 
	\\&\ge P\{\forall k, \forall \nu,\forall i,\forall t,\norm{a_{k,\nu}^{T}(\xi_{q^i}(t;\tilde{x}^0_{q^i},u)-\xi_{q^i}^{\ast}(t;\tilde{x}^{\ast 0}_{q^i},u))}<\\&~~~\sqrt{\hat{\gamma}}e^{-\mu_{q^i} t/2}/z_{k,\nu}\}
	\\&\ge P\{\sup_{0\leq t\leq T^0}\phi_{q^0}(\xi_{q^0}^{\ast}(t;\tilde{x}^{\ast0}_{q^0},u), \xi_{q^0}(t;\tilde{x}^0_{q^0},u))<\hat{\gamma}\}\times\dots\\& P\{\sup_{0\leq t\leq T^{N_q}}\phi_{q^{N_q}}(\xi_{q^{N_q}}^{\ast}(t;\tilde{x}^{\ast0}_{q^{N_q}},u), \xi_{q^{N_q}}(t;\tilde{x}^0_{q^{N_q}},u))<\hat{\gamma}\}\\&\ge(1-\frac{\alpha_{q^0} T^0}{\hat{\gamma}})\times(1-\frac{\alpha_{q^1} T^1}{\hat{\gamma}})\times\dots
	\times(1-\frac{\alpha_{q^{N_q}}T^{N_q}}{\hat{\gamma}})\\&\stackrel{\mathclap{\normalfont\mbox{(a)}}}{\ge} 1-\frac{\alpha_{q^0}T^0+\alpha_{q^1}T^1+\dots \alpha_{q^{N_q}}T^{N_q}}{\hat{\gamma}}\\&\ge 1-\frac{(\max\limits_{i}\alpha_{q^i})\cdot (T^0+T^1+\dots T^{N_q})}{\hat{\gamma}}\\&=1-\frac{(\max\limits_{i}\alpha_{q^i})\cdot T_{\textrm{end}}}{\hat{\gamma}}=1-\epsilon.    
	\end{split}  
	\end{align}             
The inequality $(a)$ follows from the fact that $(1-c_1)(1-c_2)\dots(1-c_n)\ge 1-(c_1+c_2+\dots+c_n)$ when $0\le c_i\le 1$ $(i=1,2,\dots,n)$, which can be easily proven by induction.  
                                                     
\textbf{Proof of Theorem \ref{th2}}:\\
We denote $\ell_0\triangleq\min\{\ell\vert \tilde{x}_0\in B_{q^0}(x^{\ast}_{0,{\ell}},r)\}$, so $\chi_u(x,t[0])=$ $u_{\ell_0}[0]$. If $\phi_{q^{1}}(\xi^{\ast}_{q^1,\ell_0}(t[1];x_0,u_{\ell_0}), \xi_{q^1,\ell_0}(t[1];x_0,u_{\ell_0}))<\hat{\gamma}$, then according to Algorithm \ref{alg}, $\chi_u(x,q^1,t[1])=$ $u_{\ell_0}[1]$, and so forth. So if $\forall i, \sup_{0\leq t[j^{q^i}_t]\leq T^{q^i}}\phi_{q^{i}}(\xi^{\ast}_{q^i,\ell_0}(j^{q^i}_t;\tilde{x}_0,u_{\ell_0}),$ $ \xi_{q^i,\ell_0}(j^{q^i}_t;\tilde{x}_0,u_{\ell_0}))<\hat{\gamma}$, then $\forall i, \forall j^{q^i}_t$, $\chi_u(x,q^i,t[j^{q^i}_t])=$ $u_{\ell_0}[j^{q^i}_t]$. Then as for any $\ell$, $\left[\left[\varphi_{\hat{\delta}^{\ast}}\right]\right](s_{\rho^{\ast}}({\bm\cdot};x^{\ast}_{0,\ell},u_{\ell}), 0)\ge 0$, we have $\left[\left[\varphi\right]\right]$ $(s_{\rho}({\bm\cdot};\tilde{x}_{0},\chi_u), 0)=\left[\left[\varphi\right]\right](s_{\rho}({\bm\cdot};\tilde{x}_{0},u_{\ell_0}), 0)\ge 0$. Therefore, we have 
	\begin{align}\nonumber   
	\begin{split} 
	& P\{\left[\left[\varphi\right]\right](s_{\rho}({\bm\cdot};\tilde{x}_{0},\chi_u), 0)\ge 0~\vert~\left[\left[\varphi_{\hat{\delta}^{\ast}}\right]\right](s_{\rho}^{\ast}({\bm\cdot};x^{\ast}_{0,\ell},u_{\ell}), 0)
	\\&~~~~\ge 0, \forall \ell\}\\&\ge P\{\sup_{0\leq t[j^{q^0}_t]\leq T^0}\phi_{q^0}(\xi_{q^0}^{\ast}(t[j^{q^0}_t];\tilde{x}^{\ast0}_{q^0},u_{\ell_0}), \xi_{q^0}(t[j^{q^0}_t];\tilde{x}^0_{q^0},u_{\ell_0}))<\hat{\gamma}\}\\&\times\dots P\{\sup_{0\leq t[j^{q^{N_q}}_t]\leq T^{N_q}}\phi_{q^{N_q}}(\xi_{q^{N_q}}^{\ast}(t[j^{N_q}_t];\tilde{x}^{\ast0}_{q^{N_q}},u_{\ell_0}),\\& \xi_{q^{N_q}}(t[j^{q^{N_q}}_t];\tilde{x}^0_{q^{N_q}},u_{\ell_0}))<\hat{\gamma}\}.
	\end{split}  
	\end{align}  

When there is no unexpected disturbances, we have
		\begin{align}\nonumber   
		\begin{split} 
		& P\{\sup_{0\leq t[j^q_t]\leq T^i}\phi_{q^i}(\xi_{q^i}^{\ast}(t[j^q_t];\tilde{x}^{\ast0}_{q^i},u_{\ell_0}), \xi_{q^i}(t[j^q_t];\tilde{x}^0_{q^i},u_{\ell_0}))<\hat{\gamma}\}\\&> 1-\frac{\alpha_{q^i} T^i}{\hat{\gamma}}.
		\end{split}  
		\end{align}  
		
So from the proof of Theorem \ref{th1}, $P\{\left[\left[\varphi\right]\right](s_{\rho}({\bm\cdot};\tilde{x}_{0},\chi_u), 0)\ge 0~\vert~\left[\left[\varphi_{\hat{\delta}^{\ast}}\right]\right](s_{\rho^{\ast}}({\bm\cdot};x^{\ast}_{0,\ell},u_{\ell}),$ $ 0)\ge 0,\forall \ell\}> 1-\epsilon$. 

\textbf{Proof of Theorem \ref{th3}}:\\	
We denote $\ell_0\triangleq\min\{\ell\vert \tilde{x}_0\in B_{q^0}(x^{\ast}_{0,{\ell}},r)\}$. We denote the $\tau$-time shifted input signal of $u_{\ell}(\cdot)$ as $R^\tau u_{\ell}(\cdot)$, i.e. $\forall t\ge0, R^\tau u_{\ell}(t)=u_{\ell}(t+\tau)$. We assume that in each mode $q$, at time instant $j_t^{q,1}$, $j_t^{q,2}$, $\dots$, $j_t^{q,m_q}$, the corresponding state is perturbed to another state $x^{q,1}$, $x^{q,2}$, $\dots$, $x^{q,m_q}$, respectively, where $x^k\in\bigcup\limits_{\ell=1}^{N}B_{q}(\xi^{\ast}_{q,\ell}[i_t^k],r_q)$, $j_t^k\le i_t^k\le j_t^k+\varrho$ (assume that $x^k\in\mathcal{X}_{i^k_t,\ell^k}[j_t^k]$). Then we have                                   
\begin{align}\nonumber                               
\begin{split}
& P\{\left[\left[\varphi\right]\right](s_{\rho}({\bm\cdot};x^0,\chi_u), 0)\ge 0~\vert~\left[\left[\varphi_{\hat{\delta}^{\ast}}\right]\right](s_{\rho^{\ast}}({\bm\cdot};x^{\ast}_{0,{\ell}},u),0)\ge 0, \\
&\forall \ell\}\\                              
&\ge \textstyle\prod\limits_{i=1}^{N^q} P\{\sup_{0\leq t[j_t]\leq t[j_t^{q^i,1}]}\phi_{q^{i}}(\xi^{\ast}_{q^i,\ell_0}(t[j_t];x^0_{q^i},u_{q^i,\ell_0}),   \\&\xi_{q^i,\ell_0}(t[j_t];x^0_{q^i},u_{q^i,\ell_0}))<\hat{\gamma}, \sup_{t[j_t^{q^i,1}]\leq t[j_t]\leq t[j_t^{q^i,2}]}\\&\phi_{q^{i}}(\xi^{\ast}_{q^i,\ell^1}(t-t[j_t^{q^i,1}];x^1,R^{t[i_t^1]}u_{q^i,\ell^1}),\\&      \xi_{q^i,\ell_1}(t-t[j_t^{q^i,1}];x^1,R^{t[i_t^1]}u_{q^i,\ell^1}))<\hat{\gamma},\dots,\\&\sup_{t[j_t^{q^i,m_{q^i}}]\leq t[j_t]\leq T_{\textrm{end}}}\phi_{q^{i}}(\xi^{\ast}_{q^i,\ell^{m_{q^i}}}(t-t[j_t^{q^i,m_{q^i}}];x^{m_{q^i}},R^{t[i_t^{m_{q^i}}]}u_{q^i,\ell^{m_{q^i}}}),\\&\xi_{q^i,\ell^{m_{q^i}}}(t-t[j_t^{q^i,m_{q^i}}];
x^{m_{q^i}},R^{t[i_t^{m_{q^i}}]}u_{q^i,\ell^{m_{q^i}}}))<\hat{\gamma},~\vert~                
\left[\left[\varphi_{\hat{\delta}^{\ast}}\right]\right]\\&(s_{\rho^{\ast}}({\bm\cdot};x^{\ast}_{0,{\ell}},u),0)\ge 0, \forall \ell\}\\
\end{split}  
\end{align}
\begin{align}\nonumber 
\begin{split}
&=\textstyle\prod\limits_{i=1}^{N^q}\big(P\{\sup_{0\leq t[j_t]\leq t[j_t^{q^i,1}]}\phi_{q^{i}}(s_{\rho^{\ast}}(t[j_t];x^0_{q^i},u_{q^i,\ell_0}),\\& s_{\rho}(t[j_t];x^0_{q^i},u_{q^i,\ell_0}))<\hat{\gamma}\}\times\dots \times P\{\sup_{t[j_t^{q^i,m_{q^i}}]\leq t[j_t]\leq T_{\textrm{end}}}\\&\phi_{q^{i}}(s_{\rho^{\ast}}(t[j_t]-t[j_t^{q^i,m_{q^i}}];x^{m_{q^i}}, R^{t[i_t^{m_{q^i}}]}u_{q^i,\ell^{m_{q^i}}}),\\&~~~~~~s_{\rho}(t[j_t]-t[j_t^{q^i,m_{q^i}}];x^{m_{q^i}},R^{t[i_t^{m_{q^i}}]}u_{q^i,\ell^{m_{q^i}}}))<\hat{\gamma}\}\big)\\&> \textstyle\prod\limits_{i=1}^{N^q}\big((1-\frac{\alpha t[j_t^{q^i,1}]}{\hat{\gamma}})\times(1-\frac{\alpha (t[j_t^{q^i,2}]-t[j_t^{q^i,1}])}{\hat{\gamma}})\times\dots
\\&\times(1-\frac{\alpha (T^{q^i}-t[j_t^{q^i,m_{q^i}}])}{\hat{\gamma}})\big)\ge 1-\frac{\alpha T_{\textrm{end}}}{\hat{\gamma}}=1-\epsilon.                                                                                 
\end{split}  
\end{align}

\bibliographystyle{IEEEtran}
\bibliography{zhepowerref}
\end{document}